\definecolor{dkgreen}{rgb}{0,0.6,0}
\definecolor{gray}{rgb}{0.5,0.5,0.5}
\definecolor{lgray}{rgb}{0.85,0.85,0.85}
\definecolor{mauve}{rgb}{0.58,0,0.82}
\definecolor{red}{rgb}{1.0,0,0}
\title{The Complexity of Graph Exploration Games}
\titlerunning{The Complexity of Graph Exploration Games} 
\author{Janosch Fuchs}{Department of Computer Science, RWTH Aachen University, Germany}{fuchs@algo.rwth-aachen.de}{https://orcid.org/0000-0003-3993-222X}{}
\author{Christoph Grüne}{Department of Computer Science, RWTH Aachen University, Germany}{gruene@algo.rwth-aachen.de}{https://orcid.org/0000-0002-7789-8870}{}
\author{Tom Janßen}{Department of Computer Science, RWTH Aachen University, Germany}{janssen@algo.rwth-aachen.de}{https://orcid.org/0000-0003-4617-3540}{}
\authorrunning{J. Fuchs, C. Grüne, T. Janßen}
\keywords{
Online Algorithms, Graph Exploration, Computational Complexity, Online Algorithms Complexity, Two-Player Games, PSPACE-completeness
} 
\begin{document}

\newcommand{\openNeighborhood}[2]{\ensuremath{N_{#1}(#2)}}
\newcommand{\closedNeighborhood}[2]{\ensuremath{N_{#1}[#2]}}
\newcommand{\NeighborhoodSubgraph}[2]{\ensuremath{G^N_{#1}[#2]}}

\newcommand{\LSPACE}{\textsl{LOGSPACE}}
\newcommand{\NL}{\textsl{NL}}
\newcommand{\PTIME}{\textsl{PTIME}}
\newcommand{\NP}{\textsl{NP}}
\newcommand{\PSPACE}{\textsl{PSPACE}}

\newcommand{\indexSet}[2]{\ensuremath{~#1 \in \{1, \dots, #2\}}}

\newcommand{\CombinatorialProblem}{\ensuremath{P^{C}}}
\newcommand{\CombinatorialGraphProblem}{\ensuremath{P^{CG}}}
\newcommand{\OnlineCombinatorialGraphProblem}{\ensuremath{P^{CG}_o}}
\newcommand{\VertexOrderingGraphProblem}{\ensuremath{P^{VOG}}}
\newcommand{\OnlineVertexOrderingGraphProblem}{\ensuremath{P^{VOG}_o}}

\newcommand{\LiteralGadget}[1]{\ensuremath{G^U_{\ell_{#1}}}}
\newcommand{\NegLiteralGadget}[1]{\ensuremath{G^U_{\overline{\ell}_{#1}}}}
\newcommand{\VariableGadget}{\ensuremath{G_{var}}}
\newcommand{\ClauseGadget}{\ensuremath{G_{c}}}
\newcommand{\ExtensionGadget}{\ensuremath{G_{ext}}}
\newcommand{\IdGadget}{\ensuremath{G_{id}}}

\newcommand{\additions}[1]{{\leavevmode\color{blue}\textbf{ADD?: }#1}}
\newcommand{\todo}[1]{\textcolor{red}{\textbf{TODO: #1}}\\}
\newcommand{\todonn}[1]{\textcolor{red}{\textbf{TODO: #1}}}
\newcommand{\todolater}[1]{\textcolor{orange}{\textbf{TODO (later): #1}}\\}
\newcommand{\todolayout}[1]{\textcolor{violet}{\textbf{TODO (layout): #1}}\\}
\newcommand{\cit}{\textsuperscript{[citation needed]}}

\newcommand{\sat}{\textsc{3-Satisfiability}}
\newcommand{\qsat}{\textsc{TQBF}}
\newcommand{\qsatgame}{\textsc{TQBF Game}}

\newcommand{\hp}{\textsc{Hamiltonian Path}}
\newcommand{\dhp}{\textsc{Directed Hamiltonian Path}}
\newcommand{\uhp}{\textsc{Undirected Hamiltonian Path}}
\newcommand{\ohpg}{\textsc{Online Hamiltonian Path Game}}
\newcommand{\odhpg}{\textsc{Online Directed Hamiltonian Path Game}}
\newcommand{\ouhpg}{\textsc{Online Undirected Hamiltonian Path Game}}

\newcommand{\hc}{\textsc{Hamiltonian Cycle}}
\newcommand{\dhc}{\textsc{Directed Hamiltonian Cycle}}
\newcommand{\uhc}{\textsc{Undirected Hamiltonian Cycle}}
\newcommand{\ohcg}{\textsc{Online Hamiltonian Cycle Game}}
\newcommand{\odhcg}{\textsc{Online Directed Hamiltonian Cycle Game}}
\newcommand{\ouhcg}{\textsc{Online Undirected Hamiltonian Cycle Game}}

\newcommand{\tsp}{\textsc{Travelling Salesman Problem}}
\newcommand{\dtsp}{\textsc{Directed Travelling Salesman Problem}}
\newcommand{\utsp}{\textsc{Undirected Travelling Salesman Problem}}
\newcommand{\mtsp}{\textsc{Metric Travelling Salesman Problem}}
\newcommand{\btsp}{\textsc{Bottleneck Travelling Salesman Problem}}
\newcommand{\otsg}{\textsc{Online Travelling Salesman Game}}
\newcommand{\odtsg}{\textsc{Online Directed Travelling Salesman Game}}
\newcommand{\outsg}{\textsc{Online Undirected Travelling Salesman Game}}
\newcommand{\omtsg}{\textsc{Online Metric Travelling Salesman Game}}
\newcommand{\obtsg}{\textsc{Online Bottleneck Travelling Salesman Game}}

\newcommand{\stackercrane}{\textsc{Stacker-Crane}}
\newcommand{\oscg}{\textsc{Online Stacker-Crane Game}}
\newcommand{\rpp}{\textsc{Rural Postman Problem}}
\newcommand{\orpg}{\textsc{Online Rural Postman Game}}
\newcommand{\lp}{\textsc{Longest Path}}
\newcommand{\olpg}{\textsc{Online Longest Path Game}}
\newcommand{\lc}{\textsc{Longest Cycle}}
\newcommand{\olcg}{\textsc{Online Longest Cycle Game}}

\newcommand{\stp}{\textsc{s-t Path}}
\newcommand{\ustp}{\textsc{Undirected s-t Path}}
\newcommand{\dstp}{\textsc{Directed s-t Path}}
\newcommand{\ostpg}{\textsc{Online s-t Path Game}}
\newcommand{\ostupg}{\textsc{Online Undirected s-t Path Game}}
\newcommand{\ostdpg}{\textsc{Online Directed s-t Path Game}}

\newcommand{\stt}{\textsc{s-t Trail}}
\newcommand{\ustt}{\textsc{Undirected s-t Trail}}
\newcommand{\dstt}{\textsc{Directed s-t Trail}}
\newcommand{\osttg}{\textsc{Online s-t Trail Game}}
\newcommand{\ostutg}{\textsc{Online Undirected s-t Trail Game}}
\newcommand{\ostdtg}{\textsc{Online Directed s-t Trail Game}}

\newcommand{\stw}{\textsc{s-t Walk}}
\newcommand{\ustw}{\textsc{Undirected s-t Walk}}
\newcommand{\dstw}{\textsc{Directed s-t Walk}}
\newcommand{\ostwg}{\textsc{Online s-t Walk Game}}
\newcommand{\ostuwg}{\textsc{Online Undirected s-t Walk Game}}
\newcommand{\ostdwg}{\textsc{Online Directed s-t Walk Game}}
\newcommand{\ostuwwg}{\textsc{Online Undirected Shortest s-t Walk Game}}
\newcommand{\ostdwwg}{\textsc{Online Directed Shortest s-t Walk Game}}

\newcommand{\opwfpg}{\textsc{Online Path With Forbidden Pairs Game}}
\newcommand{\pwfp}{\textsc{Online Path With Forbidden Pairs Game}}
\newcommand{\csp}{\textsc{Constrained Shortest Path}}
\newcommand{\ocspg}{\textsc{Online Constrained Shortest Path Game}}
\newcommand{\tdp}{\textsc{Two Disjoint Path}}
\newcommand{\otdpg}{\textsc{Online Two Disjoint Path Game}}
\newcommand{\vdp}{\textsc{Vertex Disjoint Path}}
\newcommand{\ovdpg}{\textsc{Online Vertex Disjoint Path Game}}

\maketitle

\begin{abstract}
Graph Exploration problems ask a searcher to explore an unknown environment.
The environment is modeled as a graph, where the searcher needs to visit each vertex beginning at some vertex.
Treasure Hunt problems are a variation of Graph Exploration, in which the searcher needs to find a hidden treasure, which is located at a designated vertex.

Usually these problems are modeled as online problems, and any online algorithm performs poorly because it has too little knowledge about the instance to react adequately to the requests of the adversary.
Thus, the impact of a priori knowledge is of interest.
One form of a priori knowledge is an unlabeled map, which is an isomorphic copy of the graph.
We analyze Graph Exploration and Treasure Hunt problems with an unlabeled map that is provided to the searcher.
For this, we formulate decision variants of both problems by interpreting the online problems as a game between the online algorithm (the searcher) and the adversary.
The map, however, is not controllable by the adversary.
The question is whether the searcher is able to explore the graph completely or find the treasure for all possible decisions of the adversary.

We analyze these games in multiple settings, with and without costs on the edges, on directed and undirected graphs and with different constraints (allowing multiple visits to vertices or edges) on the solution.
We prove \PSPACE-completeness for most of these games.
Additionally, we analyze the complexity of related problems that have additional constraints on the solution.

\end{abstract}

\newpage


\section{Introduction}

Graph Exploration problems model situations in which a searcher, like an autonomous robot, has to explore an environment and solve a task.
Among those tasks are finding the shortest path to a designated point, which is referred to as the Treasure Hunt problem, or exploring the whole environment with minimal resource consumption, which is referred to as the Online Traveling Salesman problem.
Thereby, the searcher does not know the environment at the beginning and only obtains local information during exploration. 

Typically, Graph Exploration is modeled as an online problem on a graph, that is fixed by the adversary before the online computation starts.
The searcher is positioned at a vertex and then the labels of its neighborhood are revealed together with the corresponding incident edges.
Kalyanasundaram and Pruhs defined this model as \emph{fixed graph scenario} \cite{DBLP:journals/tcs/KalyanasundaramP94}.
Based on the overall obtained knowledge, the online algorithm has to irrevocably decide along which edge the searcher moves.
For a worst-case analysis a malicious adversary is presupposed which controls the revelation process and creates the input.
The goal of the adversary is to minimize the performance of the online algorithm.

While the online algorithm moves the searcher, the adversary creates the graph and chooses the vertices that are revealed.
Therefore, the adversary is able to tailor the instance in his favor to the decisions of any online algorithm.
To overcome this asymmetry, different extensions of the online setting exist, in which the online algorithm is equipped with a priori knowledge.
Throughout this paper, we introduce an unlabeled map, which is an isomorphic copy of the input graph.
Thus, the input graph is not constructed by the adversary and only the revelation order of the vertices is determined by the adversary.

The connection between the online algorithm and the adversary is analogous to two players in an asymmetric two-player game \cite{DBLP:conf/iwoca/BohmV16,DBLP:conf/sofsem/FuchsGJ24,DBLP:conf/icalp/PapadimitriouY89}.
The input graph can be considered as the game board.
A \emph{turn} of the game consist of a \emph{move of the adversary} followed by \emph{move of the online algorithm}.
Specifically, the adversary reveals the neighborhood of the vertex $v$, on which the searcher is positioned, by revealing the labels of the neighbors and the edges connecting them to $v$.
The labels are recognizable by the online algorithm later in the game.
Thereafter, the online algorithm makes a move by choosing an incident edge of $v$ to move the searcher along.
The problem is to decide whether the online algorithm has a \emph{winning strategy}, that is, it can compute a feasible sequence of vertices, for all possible moves of the adversary.

\paragraph*{Related Work}
Among the preliminary work on online path problems are the Online Traveling Salesman Problem and the Canadian Traveler Problem.
Papadimitriou and Yannakakis~\cite{DBLP:conf/icalp/PapadimitriouY89} introduced the problem of finding a shortest path in a graph with edge cost uncertainties, which are revealed when the searcher is positioned at an incident edge, as the Canadian Traveler Problem.
Further results on this problem are discussed by Bar-Noy and Schieber~\cite{DBLP:conf/soda/Bar-NoyS91}.
Kalyanasundaram and Pruhs~\cite{DBLP:journals/tcs/KalyanasundaramP94} introduce the online version of the Traveling Salesman Problem under the fixed graph scenario, which is later referred as Graph Exploration.
They also present an algorithm that yields a $16$-approximation for undirected planar graphs.
Foerster and Wattenhofer~\cite{DBLP:conf/opodis/ForsterW12} used the same model as in Graph Exploration to analyze the Treasure Hunt Problem.
Additionally they provide lower and upper bounds on the competitivity of Graph Exploration on directed graphs.
Bounds in the undirected case for Graph Exploration are provided by Megow et al.~\cite{DBLP:journals/tcs/MegowMS12}.
They also show that the $16$-approximation of Kalyanasundaram and Pruhs~\cite{DBLP:journals/tcs/KalyanasundaramP94} extends to graphs of bounded genus, where the competitive ratio increases linearly with the genus.
This result was further improved and extended to graphs with excluded minors by Baligács et al. \cite{DBLP:conf/esa/BaligacsDHS23}.
Furthermore, there are constrained variations of the Graph Exploration problem that limit the ability of the searcher.
Duncan et. al.~\cite{DBLP:journals/talg/DuncanKK06} analyze Graph Exploration where the searcher is tied to the starting point with a tether of fixed length or has a limited fuel tank.
They give upper and lower bounds for these settings.

Another branch of Graph Exploration surveys the influence of additional information on the performance of the searcher.
A subset of those variations includes some form of a map.
Panaite and Pelc~\cite{DBLP:journals/networks/PanaiteP00} focus on a setting, where the searcher has either a labeled map, a labeled copy of the graph with an additional sense of direction, or an unlabeled map, an isomorphic copy of the graph, and compare these models. 
Furthermore, Dessmark and Pelc~\cite{DBLP:journals/tcs/DessmarkP04} use a similar model where the searcher has an unlabeled map and either knows where it starts on this map (anchored map) or not (unanchored map).
Additionally, maps are also of interest for the Treasure Hunting Problem. Bouchard et al.~\cite{DBLP:journals/networks/BouchardLP22} analyze the performance gain of using different forms of maps.
Instead of using a model of a map, an abstract and general form of information may be used as well, the so-called advice model.
The advice is provided as a binary string, whereby the advice complexity is the number of used bits.
Dobrev et al.~\cite{DBLP:conf/sirocco/DobrevKM12} give a lower bound of $\Omega(|V|\log(|V|))$ on the advice complexity when the algorithm has to compute an optimal solution and present an algorithm using linear advice and achieving a constant competitive ratio of $6$.
Böckenhauer et al.~\cite{DBLP:journals/iandc/BockenhauerFU22} show that $\mathcal{O}(|E|)$ advice bits are sufficient to optimally explore any graph.
Besides, Komm~\cite{DBLP:conf/sirocco/KommKKS15} et al. analyze the Treasure Hunt Problem with advice.
At last, a new branch uses prediction models as source of information.
Eberle et al.~\cite{DBLP:conf/aaai/EberleLMNS22} consider a learning prediction framework with a bounded error to potentially robustify existing algorithms.

With the work on the Canadian Traveler Problem, Papadimitriou and Yannakakis~\cite{DBLP:conf/icalp/PapadimitriouY89} also introduced online graph games with a map.
The task is to find a shortest $s$-$t$-path where the edge costs are chosen by the adversary.
They show the \PSPACE-completeness to decide whether there is an $r$-competitive strategy for traversing the graph, where $r$ is a given ratio.
This work is also continued by Bar-Noy and Schieber~\cite{DBLP:conf/soda/Bar-NoyS91} on different variations, where the $k$-Canadian Traveler Problem remains \PSPACE-complete.
Additionally, Böhm and Vesel\'{y}~\cite{DBLP:conf/iwoca/BohmV16} show the Online Chromatic Number problem to be \PSPACE-complete.
In there, an unlabeled map is provided to the online algorithm.
In a similar setting, Fuchs et al. \cite{DBLP:conf/sofsem/FuchsGJ24} build a reduction framework which can be applied to graph problems that search for a subset of vertices such as vertex cover, independent set or dominating set.
With that they show that online games based on these problems are \PSPACE-complete.
A complexity analysis on a broader set of \PSPACE-hard combinatorial games can be found in Fraenkel and Goldschmidt's survey~\cite{DBLP:journals/jct/FraenkelG87}.

\paragraph*{Contribution}
We analyze the complexity properties of Graph Exploration problems by taking up the ideas by Papadimitriou and Yannakakis~\cite{DBLP:conf/icalp/PapadimitriouY89}, Böhm and Vesel\'{y}~\cite{DBLP:conf/iwoca/BohmV16} as well as Fuchs et al. \cite{DBLP:conf/sofsem/FuchsGJ24}.
That is, we introduce online games variants of Graph Exploration problems that include an unlabeled map of the graph that the online algorithm can use.
On the one hand, we define and analyze the Online Traveling Salesman Game, which is the online game version of the original Graph Exploration problem defined by Kalyanasundaram and Pruhs~\cite{DBLP:journals/tcs/KalyanasundaramP94}.
It asks whether an online algorithm is able to find a Hamiltonian cycle of small weight in a given graph for all possible reveal decisions of the adversary while having an unlabeled map.
On the other hand, we define the online game version of the Treasure Hunt problem.
It asks whether an online algorithm is able to find an s-t-path in a given graph for all possible reveal decisions of the adversary while having an unlabeled map.

Furthermore, we analyze variants of both problems:
Besides merely asking for the existence of a path or cycle, an additional number $k \in \mathbb{N}$ is introduced, limiting the length of the solution.
Additionally, we consider versions of both problems, in which we relax the path constraint to be a trail or a walk as well as constrained versions of these problems such as the metric version of the online traveling salesman game. 
We show that nearly all of the above mentioned problems are \PSPACE-complete.
The other problems degenerate to simple offline problems such as \ostuwg, which is the \LSPACE-complete problem \textsc{UstCon} \cite{DBLP:journals/jacm/Reingold08}.

\paragraph*{Paper Summary}
In \Cref{sec:preliminaries}, we define preliminary terms including complexity theoretic concepts and the online game setting.
In \Cref{sec:path}, we analyze the \ostpg{} as well as variants in directed and undirected graphs with and without edge costs.
In \Cref{sec:hampath}, we examine the results on \ohpg{} and the related variants in directed and undirected graphs with and without costs.
Then, variations of the classical \stp{} and \hp{} are studied in the online game context in \Cref{sec:additional}.
At last in \Cref{sec:conclusion}, we conclude the paper and present remaining open problems.

\section{Preliminaries}
\label{sec:preliminaries}
As usual, we define a walk as a sequence of connected edges.
A trail is a walk where all edges are distinct, and a path is a trail such that no vertex occurs more than once.
We also refer to $s$-$t$ walks (resp. trails, paths) to indicate the two endpoints of the walks (resp. trails, paths).
With $N(v)$ we refer to the open neighborhood of vertex $v$.

\paragraph*{Search Sequences}
A search sequence is a (valid) solution to an instance of a Graph Exploration problem.
Intuitively, a search sequence is a walk, which does not contain cycles consisting only of vertices that have occured in the same walk before.  
That is in every cyclic subwalk, a previously non-visited vertex has to be included in the walk.

\begin{definition}[Search Sequence]\label{def:searchSequence}
	For a graph $G$, a \emph{search sequence} is a sequence of arcs or edges $e_1, \dots, e_{n-1}$ in $G$ for which there is a sequence $v_1, \dots, v_n$ of vertices in $G$, such that $e_i = \{v_i, v_{i+1}\}$.
	Furthermore, for all subsequences $v_j, \dots, v_k$ of $v_1, \dots, v_n$ with $v_j = v_k$ it holds that $\{v_1, \dots, v_{j-1}\} \subsetneq \{v_1, \dots, v_k\}$.
	If $v_1 = v_{n}$, we call $S$ a \emph{cyclic search sequence}.
	The cost of a search sequence $S = e_1, \dots, e_{n-1}$ is defined by $cost(S)=\sum_{e \in S} cost(e)$, if the edges have costs assigned, and $cost(S)= |S|$ otherwise.
\end{definition}

In the online setting, the search sequence is determined by the moves of the adversary as well as the moves of the online algorithm.
In each step, the online algorithm is located at some vertex $v$ and chooses one of the vertices from $N(v)$ as the target and moves itself towards it.
Then the adversary reveals the neighborhood of the target.
That is, the adversary reveals the labels of all neighboring vertices $n \in N(v)$ as well as all edge/arc weights.
We refer to this model as the \emph{neighborhood reveal model}.
Throughout the paper, we call  vertices to which the online algorithm moved before \emph{visited} and vertices which are revealed for the first time \emph{new}.
Furthermore, we call non-visited vertices which are revealed a subsequent time \emph{known}.

The online algorithm may be restricted to different variations of search sequences.
By definition, a search sequence has to be a walk in the graph.
We also consider problems that restrict the search sequences to trails or paths.
While trails and paths are polynomially bounded in their length by the size of the graph, this is generally not the case for walks.
However, a search sequence may not contain any cycle that does not visit previously unvisited vertices.
This does not restrict the online algorithm, since traversing a cycle of only visited vertices does not reveal any new vertices and puts the online algorithm back in the position it was before.
Thus the length of a search sequence is always polynomially bounded in the size of the input graph.

\paragraph*{Complexity Theory}
We define a decision problem to be a subset of $\{0,1\}^*$.
For two decision problems $A$ and $B$, we say that $A$ is polynomially reducible to $B$, if there is a function $f: \{0,1\}^* \rightarrow \{0,1\}^*$ computable in polynomial time such that $x \in A$ if and only if $f(x) \in B$.
The class \PSPACE{} is given by all decision problems that can be decided by a deterministic Turing machine with polynomial space.
As for \NP, a problem is called \PSPACE-hard, if any other problem in \PSPACE{} can be reduced to it by a polynomial reduction.
A problem that is both contained in \PSPACE{} and is \PSPACE-hard, is also called \PSPACE-complete.
The canonical \PSPACE-complete problem is \textsc{True Quantified Boolean Formula} \cite{DBLP:conf/stoc/StockmeyerM73} or \qsat{} for short.
For this paper, the game version of \qsat{} -- \qsatgame{} -- is of most interest.

This game is played by two players: the $\exists$-player and the $\forall$-player.
The $\exists$-player controls all $\exists$-quantified variables and the $\forall$-player controls all $\forall$-quantified variables in the order of quantification.
That is, a turn consists of a move of the $\exists$-player followed by a move of the $\forall$-player, in which they decide the assignment of their variable(s).
The $\exists$-player wins if and only if $\varphi(X_1, \ldots, X_n)$ is satisfied with the assignments of both players.

\begin{definition}[\qsatgame]\label{prelim:def:tqbfgame}\hfill\\
	\textbf{Given:} A fully quantified Boolean formula $Q_1 X_1 \ldots Q_n X_n \varphi(X_1, \ldots, X_n)$ with $Q_i \in \{\exists, \forall\}$ for $i \in \{1, \ldots, n\}$. \\
	\textbf{Question:} Does the $\exists$-player have a winning strategy?
\end{definition}

Deciding whether the $\exists$-player has a winning strategy is \PSPACE-complete by a simple reduction from \qsat.
W.l.o.g. we assume $\varphi$ to be in CNF.
Furthermore, we assume clauses to only contain three literals for simplicity, but our constructions also extend to any number of literals per clause.

\paragraph*{Online Search Sequence Games} 
A search sequence problem $P^{SSP}$ has a graph $G$ as input and the feasible solutions are a subset of all search sequences in $G$.
Examples for such problems are \textsc{s-t Path} and \textsc{Hamiltonian Path}.
For any problem $P^{SSP}$, we define an online game version.
\begin{definition}[Online Search Sequence Game]\label{prelim:def:onlinegame}\hfill\\
	\textbf{Given:} A graph $G$ and possibly start and/or end vertices. \\
	\textbf{Question:} Does an online algorithm exist, that finds a valid search sequence in $G$ (as defined by $P^{SSP}$) for all strategies of the adversary in the neighborhood reveal model, while the online algorithm knows an unlabeled map of $G$?
\end{definition}
We also refer to this problem as $P^{SSP}_O$.
Since our definition of search sequences implies them having a length polynomial in the size of the input graph (as argued above), we obtain the following theorem.

\begin{restatable}{theorem}{prelimthmpspaceContainment}\label{prelim:thm:pspaceContainment}
	If $P^{SSP} \in NP$, then $P_O^{SSP} \in \PSPACE$.
\end{restatable}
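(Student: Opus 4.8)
The plan is to argue that a play of $P_O^{SSP}$ is a two-player game that lasts only polynomially many rounds, in which every single move of either player is an object of polynomial size, and whose winner is fixed at the end by a predicate that the hypothesis $P^{SSP}\in\NP$ makes easy to evaluate; a recursive evaluation of the resulting game tree (taking the disjunction over legal moves at online-algorithm nodes, the conjunction at adversary nodes) then runs in polynomial space. Equivalently, one can observe that the game is decided by an alternating polynomial-time machine and invoke $\mathrm{APTIME}=\PSPACE$.

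First I would fix the game formally in the notation of the preliminaries: the searcher starts in $s$; in each round the adversary reveals the labelled neighbourhood of the searcher's current vertex consistently with the fixed unlabelled map, and then the online algorithm moves the searcher along one incident edge, respecting whichever restriction (path, trail, or walk) the concrete problem $P^{SSP}$ imposes on search sequences. A play ends once the online algorithm has no legal move left — respectively reaches $t$, or has visited all vertices — and thereby produces a search sequence $S$; the online algorithm wins the play if and only if $S$ is a feasible solution of $P^{SSP}$ on the given instance.

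The quantitative heart of the argument is to bound the number of rounds by a polynomial $p(n)$ in the input size $n$. In the path variants every vertex is visited at most once, so $p(n)\le|V|$; in the trail variants every edge or arc is traversed at most once, so $p(n)\le|E|+|A|$; in the walk variants I would invoke the normalization, established above, that a new vertex must be revealed between two visits of the same vertex — since at most $|V|$ vertices are ever revealed, any fixed vertex is visited at most $|V|+1$ times and hence $p(n)\le|V|\cdot(|V|+1)$. In all cases $p(n)$ is polynomial. Moreover a single move is a polynomial-size object: the online algorithm selects one of at most $|V|$ incident edges, and an adversary move is a consistent extension of the revealed labelled subgraph, described by the assignment of map-vertices to the newly revealed neighbours; from a position together with a move, both the legality of the move and the successor position can be computed in polynomial time, and the running search sequence (which is of length $O(p(n))$) can be maintained explicitly.

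Given this, I would decide the game by the straightforward recursion on the game tree just described, evaluating a leaf by testing whether its search sequence $S$ is feasible for the instance. The recursion has depth $O(p(n))$ and at each level stores only the current position and the move currently being tried, i.e.\ $\mathrm{poly}(n)$ bits, so the whole procedure runs in polynomial space; the leaf test lies in $\NP\subseteq\PSPACE$ because $P^{SSP}\in\NP$ and $S$ has polynomial length (it is the verification step of the $\NP$-algorithm, and in the worst case can be absorbed as one further existential layer of the recursion). Hence $P_O^{SSP}\in\PSPACE$. The one place that needs genuine care — and where a naive approach fails — is the length bound: a walk-type play could a priori be exponentially long or even fail to terminate, and it is precisely the ``reveal a new vertex between repeated visits'' convention that rescues polynomiality; one also has to check that the adversary's legal moves really are polynomial-size and polynomial-time recognizable, so that the per-level bookkeeping of the recursion stays within polynomial space.
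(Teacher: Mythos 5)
Your proposal is correct and follows essentially the same route as the paper: both arguments rest on bounding the number of turns polynomially (by $|V|$ for paths, $|E|+|A|$ for trails, and roughly $|V|^2$ for walks via the ``reveal a new vertex between repeated visits'' convention), observing that each configuration is of polynomial size, and concluding membership in \PSPACE{} by evaluating the game tree in polynomial space, with $P^{SSP}\in\NP$ used to keep the solution encoding and the final feasibility check polynomial. You spell out the game-tree recursion and the role of the \NP{} hypothesis more explicitly than the paper does, but the substance is identical.
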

\begin{proof}
	The instance graph is encoded in linear space.
	The solution (sequence of edges) is encoded in at most polynomial space because the base problem $P^{SSP}$ is in \NP.
	The number of turns is different for the type of the problem.
	\begin{itemize}
		\item For path problems, the number of turns is bounded by the number of vertices ($|V|$).
		\item For trail problems, the number of turns is bounded by the number edges and arcs ($|E| + |A|$).
		\item For walk problems, the definition of a search sequence bounds the number of turns by $|V|^2$.
	\end{itemize}
	Thus, the number of turns is polynomial in the input.
	For each turn, the currently revealed graph is stored as well as the current solution.
	This is polynomial in the input.
	Thus, the used space is overall polynomial for each turn.
	Consequently, the problem $P_O^{SSP}$ is in \PSPACE.
\end{proof}

\section{Path Problems}
\label{sec:path}
The first class of problems that we analyze are $s$-$t$ path problems.
The online versions of these problems can be interpreted as a Treasure Hunt problem.
We start our complexity analysis with the \ostupg{}.

\begin{definition}[\ostupg]\label{path:def:ostupg}\hfill\\
	\textbf{Given:} An undirected graph $G = (V, E)$, and two vertices $s, t \in V$. \\
	\textbf{Question:} Does an online algorithm exist, that finds a $s$-$t$ path in $G$ for all strategies of the adversary in the neighborhood reveal model, while the online algorithm knows an unlabeled map of $G$?
\end{definition}

We show that \ostupg{} is \PSPACE-complete and derive further results on variations which include $s$-$t$ path, $s$-$t$ trail and $s$-$t$ walk on directed and undirected graphs.
Additionally, we survey the online versions of constrained path problems.

\begin{restatable}{theorem}{paththmuPathPspaceHard}\label{path:thm:uPathPspaceHard}
	\ostupg{} is \PSPACE-complete.
\end{restatable}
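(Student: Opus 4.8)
The containment in \PSPACE{} is immediate: \ustp{} is solvable in polynomial time and hence lies in \NP{}, so \Cref{prelim:thm:pspaceContainment} applies. It therefore remains to prove \PSPACE-hardness, for which I would reduce from \qsatgame{}. Given an instance $\exists x_1\, \forall x_2\, \exists x_3 \cdots Q x_n\, \phi$ with $\phi$ in $3$-CNF over clauses $C_1,\dots,C_m$ (the strictly alternating, three-literal form that the preliminaries guarantee we may assume), the plan is to build, in polynomial time, an undirected graph $G$ with designated vertices $s$ and $t$ such that the searcher has a winning strategy for \ostupg{} on $(G,s,t)$ with an unlabeled map of $G$ if and only if the $\exists$-player wins the formula game on $\Phi$. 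The intended correspondence of players is that the \emph{searcher} plays the role of the $\exists$-player and the \emph{adversary} plays the role of the $\forall$-player. Structurally, $G$ is a ``spine'' running from $s$ through one \emph{variable gadget} per variable in quantifier order and then through one \emph{clause gadget} per clause before reaching $t$; each variable gadget contains two parallel sub-paths, a \emph{true}-path and a \emph{false}-path, and the assignment produced during the traversal of the spine is recorded by which of the two the searcher walked along, while the sub-path that is \emph{not} used stays completely unvisited and is reused afterwards by the clause gadgets. Since the search sequence must be a simple path, the game lasts at most $|V(G)|$ turns, matching the bound used in \Cref{prelim:thm:pspaceContainment}.

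The two kinds of variable gadget are where the map and the revelation order interact. An $\exists$-variable gadget is a simple diamond glued in at a cut vertex of the spine: the searcher freely picks the true-path or the false-path, which models the $\exists$-player's move. A $\forall$-variable gadget must instead hand the choice to the adversary. Here the two sub-paths are arranged so that, in the unlabeled map, the local neighbourhood the searcher observes when it must leave the entry vertex is identical for both branches, so the searcher cannot tell them apart and its choice of incident edge is inconsequential; the adversary commits to which branch the searcher actually entered only one turn later, when it first reveals the neighbourhood of the vertex the searcher moved to, and it is precisely this labelling decision that fixes the truth value of the $\forall$-variable — and the searcher learns the value exactly then, just as the $\exists$-player learns the $\forall$-player's move only after it is made. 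For every such labelling to remain consistent with the fixed graph $G$, one has to engineer $G$ to carry, for each $\forall$-variable, an involutive automorphism swapping that variable's two sub-paths together with their attached connectors and clause-gadget ports, with ``decoy'' structure added so that the positive and negative worlds are genuinely isomorphic as unlabeled graphs; these automorphisms are the resource the adversary ``spends'', one per $\forall$-move.

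The clause gadget for $C_j$ is designed, in the style of the classical \hp{}-type reduction for $3$-SAT, so that it can be crossed from its entry to its exit by a short detour that is available only if at least one of the three literals of $C_j$ left its witnessing sub-path unvisited. Because the search sequence is a simple path, a wrong guess inside a clause gadget strands the searcher at an already-consumed vertex or in a dead end, so it cannot bluff through an unsatisfied clause; conversely the adversary cannot block a satisfied clause, since the required detour genuinely exists in the fixed $G$ and the searcher, which by that point knows the entire assignment, can follow it once the relevant neighbourhoods are revealed. Correctness then follows by translating strategies in both directions. From a winning $\exists$-strategy the searcher copies the $\exists$-moves on the $\exists$-gadgets, reads whatever the adversary forces on each $\forall$-gadget as the corresponding $\forall$-move, and — the resulting assignment satisfying $\phi$ — threads all clause gadgets to reach $t$. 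From a winning searcher strategy one extracts a winning $\exists$-strategy: the searcher's branch choices supply the $\exists$-moves, every adversary labelling pattern realizes some $\forall$-line, and reaching $t$ on every line forces every clause gadget to have been crossable, i.e.\ $\phi$ holds on every $\forall$-line.

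I expect the main obstacle to be the $\forall$-gadget and its global consequences. In a \emph{fixed} graph the adversary can neither add nor remove edges, so ``disabling'' a wrong route must be achieved purely through the path constraint, and keeping the adversary's binary choices genuinely open forces $G$ to respect a carefully coordinated family of involutive automorphisms (essentially a $(\mathbb{Z}/2)^p$ acting, where $p$ is the number of $\forall$-variables), realized via decoy substructures inside the clause gadgets. Making the symmetry, the decoys, the connectors, and the labelling discipline cooperate without introducing unintended shortcuts — and without letting the searcher extract a $\forall$-choice prematurely by clever play — is where the bulk of the technical work will lie; the remaining results claimed in the theorem statement (trail, walk, directed, and cost variants) should then follow by localized modifications of the same construction, which is why those are organized as separate (restatable) statements.
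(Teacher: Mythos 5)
Your high-level architecture matches the paper's: reduce from \qsatgame{}, build a spine of variable gadgets followed by clause gadgets, let the searcher pick between two distinguishable branches for $\exists$-variables, and hand the choice to the adversary for $\forall$-variables by making the two branch-entry vertices indistinguishable at the moment of commitment, with the searcher learning the forced value only afterwards. However, there is a genuine conceptual problem in how you model the adversary's power. You claim that keeping the $\forall$-choice open requires $G$ to carry an involutive automorphism swapping the entire true- and false-subpaths ``together with their attached connectors and clause-gadget ports,'' padded by decoys so that the two worlds are isomorphic as unlabeled graphs. That is neither necessary nor plausibly achievable: the two subpaths attach to \emph{different} clause gadgets, so a genuine swap-automorphism would have to permute clauses, and adding decoys to force such an isomorphism would tend to erase exactly the satisfied/unsatisfied asymmetry the reduction needs. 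The paper needs no such symmetry. The adversary's power is only the freedom to assign labels to \emph{newly revealed} vertices; since at the instant the searcher must leave $u^i_5$ the only revealed facts about the two candidates are ``new neighbor of $u^i_5$,'' the partial structure is symmetric in them regardless of how asymmetric their continuations are, the (deterministic) searcher's pick is simulatable, and the adversary commits the label-to-vertex correspondence only after that pick. No global automorphism of $G$, and no $(\mathbb{Z}/2)^p$ action, is involved; pursuing that route is where your construction would get stuck.

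The second gap is that your sketch omits the enforcement machinery that makes any such reduction work when the searcher, not the adversary, moves the token. The paper's construction rests on two devices you do not supply: \emph{sinks} (degree-one vertices attached wherever the searcher might be tempted to step to an unidentified new neighbor, so that the adversary can relabel and trap it), and \emph{reveal edges} (unusable edges whose sole purpose is to mark the intended next vertex as ``known'' so the searcher can follow the designated route and, in a clause gadget, recognize the entry of a literal-path exactly when that literal is satisfied). Your clause mechanism --- rerouting through the \emph{unvisited} subpath of a witnessing variable --- also differs from the paper's (a dedicated per-literal path whose entry vertex has been made known), and it creates escape routes back into variable gadgets that you would then have to close off; the paper proves explicitly that such re-entries are impossible. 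Without the sink/reveal-edge discipline, nothing stops the searcher from deviating from the spine or bluffing through an unsatisfied clause, so the ``only if'' direction of your correctness argument does not yet go through.
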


\paragraph*{Reduction Overview}
We show the \PSPACE-hardness for \ostupg{} by a reduction from \qsatgame{}.
However, when considering games based on online problems, the online algorithm always chooses the next vertex.
We can still model choices of the adversary though, by letting the online algorithm choose between two vertices it cannot distinguish.
This way, the online algorithm is able to decide the truth assignment of $\exists$-variables, and the adversary is able to decide the truth assignment of $\forall$-variables.

The following reduction is loosely adapted from the reduction of Li et. al. \cite{DBLP:journals/dam/LiMS90}.
The variable gadget of our reduction essentially consists of two paths, which end in the same vertex.
One path corresponds to assigning the variable the value true, the other the value false.
The clause gadget consists of two disconnected vertices, connected by one path for each literal they contain, where the first vertex of each path has an edge to the respective variable gadget.
If for a clause at least one variable gadget is set to a value that satisfies the clause, then the online algorithm can identify one of the paths to traverse the clause gadget.
The $s$-$t$ path, the online algorithm needs to find, starts in $s$, then traverses all variable gadgets (using only one of the two paths), then traverses all clause gadgets, and finally reaches $t$.

Two important parts of our reduction that use the online nature of the game are additional edges to reveal vertices, making them recognizable for later decisions, and \emph{sinks}.
A sink being attached to a vertex $v$ means that there is a vertex connected to $v$ by an edge but to no other vertices.
The purpose of a sink is to prevent the online algorithm from choosing a new neighbor of $v$, as that allows the adversary to trap it in the sink.
Edges that reveal vertices are usually used together with sinks, to prevent the online algorithm from traversing them, but still allow it to recognize a vertex later.

\paragraph*{Variable Gadget Overview}

The variable gadget for variable $x_i$ roughly consists of four parts, using the same numbering as in \Cref{st:fig:variable}:
\begin{enumerate}
	\item There are two paths, one corresponding to assigning the variable the value true, and the other corresponding to the value false.
	In \Cref{st:fig:variable}, these paths start at the vertices $v^i_1$ and $w^i_1$, and meet at $u^i_6$.
	The variable is set to true by choosing the path corresponding to true. 
	This reveals vertices in the clauses that are satisfied, helping the algorithm traversing those clause gadgets later.
	\item There are vertices simulating the decision for the variable assignment of the online algorithm or adversary, depending on whether $x_i$ is $\exists$-quantified or $\forall$-quantified.
	In \Cref{st:fig:variable}, this is done by the vertices $v^i$ and $w^i$.
	The green edge only exists if $x_i$ is $\exists$-quantified, and reveals which vertex corresponds to which path.
	On the other hand, if $x_i$ is $\forall$-quantified, the online algorithm cannot distinguish $v^i$ and $w^i$.
	Only after choosing one of the two, the online algorithm learns the truth assignment from the map, as the map shows whether the path corresponding to true (resp. false) has clauses attached first.
	This is necessary, so the online algorithm can make its choices for $\exists$-variables dependent on the choices for $\forall$-variables of the adversary.
	\item There is a path of vertices whose purpose is to reveal other parts of the gadget, so the online algorithm can recognize them later.
	As described above, this includes revealing the truth assignments that correspond to the two paths, but also further vertices on the paths, indicated by the cyan and dot-dot-dashed edges in \Cref{st:fig:variable}.
	This is used together with sinks to prevent the online algorithm from unwanted behavior.
	\item There is a small gadget that ensures the online algorithm correctly finds its way through the path that reveals the later parts of the gadget.
	Furthermore this gadget enforces that the algorithm traverses the entire path of Part 3, to prevent it from producing a mixed variable assignment by using it to switch between the two paths of Part 1.
\end{enumerate}
\begin{figure}[t]
	\def\sinkcolor{red}
	\def\revealcolorid{blue}\def\revealcoloriddot{dashed}
	\def\wincolor{dkgreen}\def\wincolordot{densely dash dot}
	\def\revealcolortfe{dkgreen}\def\revealcolortfedot{densely dash dot}
	\def\revealcolorva{cyan}\def\revealcolorvadot{densely dash dot dot}
	\def\revealcolorclause{mauve}\def\revealcolorclausedot{dotted}
	\def\clausecolor{orange}\def\clausecolordot{densely dotted}

	\centering
	\resizebox{0.82\textwidth}{!}{
		\begin{tikzpicture}[scale=0.4,
			node/.style = {shape=circle, draw, inner sep=0pt, minimum size=0.25cm},
			textnode/.style = {shape=circle, draw, inner sep=0pt, minimum size=0.4cm},
			sink/.style = {shape=circle, draw=\sinkcolor, inner sep=0pt, minimum size=0.1cm},
			box/.style = {rectangle, fill=gray!20, rounded corners, fill opacity=1, inner sep=5pt}]
			\node (x1t0copy) at (11, 3) {};
			\node (x1f0copy) at (11, -3) {};
			\node (x1e2copy) at (23, 0) {};
			\node (sx1t6copy) at (21, 3.8) {};
			\node (sx1f2copy) at (15, -3.8) {};
			\node[box, fit=(x1t0copy)(x1f0copy)(x1e2copy)(sx1t6copy)(sx1f2copy)] (box1) {};
			\node[below right, inner sep=3pt] at (box1.north west) {1.};
			\node (x1tcopy) at (9, 3.8) {};
			\node (x1fcopy) at (9, -3.8) {};
			\node (x1dcopy) at (8, 0) {};
			\node[box, fit=(x1tcopy)(x1fcopy)(x1dcopy)] (box2) {};
			\node[below right, inner sep=3pt] at (box2.north west) {2.};
			\node (id21copy) at (5.5, -3.8) {};
			\node (id29copy) at (5.5, 3.8) {};
			\node (sid21copy) at (6, 3.8) {};
			\node[box, fit=(id21copy)(id29copy)(sid21copy)] (box3) {};
			\node[below right, inner sep=3pt] at (box3.north west) {3.};
			\node (id11copy) at (1.5, 3.8) {};
			\node (id20copy) at (3.5, -3.8) {};
			\node[box, fit=(id11copy)(id20copy)] (box4) {};
			\node[below right, inner sep=3pt] at (box4.north west) {4.};
			
			\node at (21.5, 3) {true};
			\node at (21.5, -3) {false};
			
			\node[textnode] (id11) at (1.5, 0) {\tiny$u^i_1$};
			\node[textnode] (id12) at (2.5, 1) {\tiny$u^i_3$};
			\node[textnode] (id13) at (2.5, -1) {\tiny$u^i_2$};
			\path[-] (id11) edge (id12);
			\path[-] (id11) edge (id13);
			\path[-, ultra thick] (id12) edge (id13);
			\node[sink] (sid12) at (2.5, 2) {};
			\node[sink] (sid13) at (2.5, -2) {};
			\path[-, draw=\sinkcolor] (id12) edge (sid12);
			\path[-, draw=\sinkcolor] (id13) edge (sid13);
			
			\node[textnode] (id20) at (3.5, 0) {\tiny$u^i_4$};
			\node[textnode] (id21) at (6, 2.25) {\tiny$p^i_1$};
			\path[-, ultra thick] (id20) edge (id21);
			\foreach \x in {2,3,4} {
				\pgfmathtruncatemacro{\y}{\x-1}
				\node[textnode] (id2\x) at (6, -1.5*\x+3.75) {\tiny$p^i_\x$};
				\path[-, ultra thick] (id2\y) edge (id2\x);
			}
			\foreach \x in {2,3,4}
				\path[-, draw=\revealcolorid, \revealcoloriddot] (id20) edge (id2\x);
			\node[sink] (sid20) at (3.5, -1) {};
			\path[-, draw=\sinkcolor] (id20) edge (sid20);
			\foreach \x in {1,...,4} {
				\node[sink] (sid2\x) at (5.2, -1.5*\x+3.75) {};
				\path[-, draw=\sinkcolor] (id2\x) edge (sid2\x);
			}
			\path[-] (id12) edge (id20);
			\path[-] (id13) edge (id20);
			\path[-, draw=\revealcolorid, \revealcoloriddot] (id12) edge (id21);
			
			\node[textnode] (x1d) at (8, 0) {\tiny$u^i_5$};
			\node[textnode] (x1t) at (9, 2.25) {\tiny$v^i$};
			\node[textnode] (x1t0) at (11, 3) {\tiny$v^i_1$};
			\node[textnode] (x1f) at (9, -2.25) {\tiny$w^i$};
			\node[textnode] (x1f0) at (11, -3) {\tiny$w^i_1$};
			\path[-] (x1d) edge (x1t) (x1t) edge (x1t0);
			\path[-] (x1d) edge (x1f) (x1f) edge (x1f0);
			\foreach \x in {1,...,4} {
				\pgfmathtruncatemacro{\y}{\x-1}
				\pgfmathtruncatemacro{\z}{\x+1}
				\node[textnode] (x1t\x) at (11+2*\x, 3) {\tiny$v^i_\z$};
				\path[-] (x1t\y) edge (x1t\x);
				
				\node[textnode] (x1f\x) at (11+2*\x, -3) {\tiny$w^i_\z$};
				\path[-] (x1f\y) edge (x1f\x);
			}
			\node[textnode] (x1e1) at (21, 0) {\tiny$u^i_6$};
			\node[textnode] (x1e2) at (23, 0) {\tiny$u^i_7$};
			\path[-] (x1e1) edge (x1t4) edge (x1f4);
			\path[-, ultra thick] (x1e1) edge (x1e2);
			\node[sink] (sx1t1) at (13, 2) {};
			\path[-, draw=\sinkcolor] (x1t1) edge (sx1t1);
			\node[sink] (sx1f3) at (17, -2) {};
			\path[-, draw=\sinkcolor] (x1f3) edge (sx1f3);
			\node[sink] (sx1e1) at (19.8, 0) {};
			\path[-, draw=\sinkcolor] (x1e1) edge (sx1e1);
			\node (c1) at (13, 5.7) {clauses containing $x_i$};
			\node (c11) at (12, 5.5) {};
			\node (c12) at (13, 5.5) {};
			\node (c13) at (14, 5.5) {};
			\path[-, draw=\clausecolor, \clausecolordot] (x1t1) edge (c11) edge (c12) edge (c13);
			\node (c2) at (17, -5.7) {clauses containing $\overline{x}_i$};
			\node (c21) at (16, -5.5) {};
			\node (c22) at (17, -5.5) {};
			\node (c23) at (18, -5.5) {};
			\path[-, draw=\clausecolor, \clausecolordot] (x1f3) edge (c21) edge (c22) edge (c23);
			\path[-, draw=\revealcolorid, \revealcoloriddot] (id21) edge (x1d);
			\path[-, ultra thick] (id24) edge (x1d);
			
			\path[-, draw=\revealcolortfe, \revealcolortfedot] (id22) edge (x1t);
			\path[-, draw=\revealcolorva, \revealcolorvadot, out=60, in=240, out distance=4.5cm, in distance=3cm] (id23) edge (x1t2);
			\path[-, draw=\revealcolorva, \revealcolorvadot, out=350, in=120, out distance=4cm, in distance=3cm] (id23) edge (x1f4);
			\path[-, draw=\revealcolorva, \revealcolorvadot, out=0, in=225, out distance=7cm, in distance=1cm] (id23) edge (x1e2);
			
			\node[textnode] (s) at (-0.5, 0) {$s$};
			\node[textnode] (t) at (25, 0) {$t$};
			\path[-, dashed] (s) edge (id11) (x1e2) edge (t);
		\end{tikzpicture}
	}
	\caption{Variable gadget for the reduction from \qsatgame{} to \ostpg{} for a variable $x_i$. 
		The thick solid black edges have to be used by the algorithm, while the thin solid black edges are optional.
		Any non-solid colored edge cannot be used, and only exists for later recognition of the vertex it connects to. 
		These properties are ensured by the sinks (red) that allow the adversary to trap the algorithm whenever it uses a non-solid non-black edge.
	}
	\label{st:fig:variable}
\end{figure}
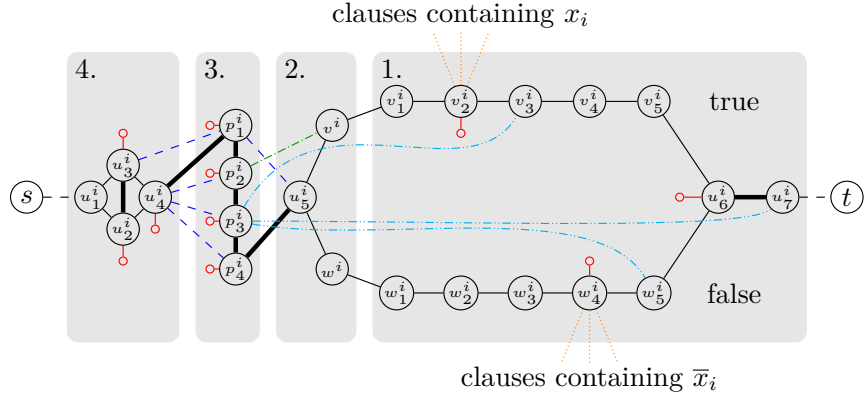 

\paragraph*{Formal Definition of the Variable Gadget}\label{path:def:varGadget}
	\begin{enumerate}
		\item In Part 1 there are vertices $v^i_j, w^i_j$ for $j \in \{1, \dots, 5\}$, where the vertices $v^i_j$ (resp. $w^i_j$) form a path.
			Further, there are vertices $u^i_6, u^i_7$ and edges $\{v^i_5, u^i_6\}, \{w^i_5, u^i_6\}, \{u^i_6, u^i_7\}$. 
			The vertices $u^i_6$, $v^i_2$ and $w^i_4$ have a sink attached.
			The vertex $v^i_2$ (resp. $w^i_4$) has an edge to the vertex $c^j_i$ (resp. $c^k_i$) of the $j$-th (resp. $k$-th) clause containing $x_i$ as a positive (resp. negative) literal.
		\item Part 2 consists of $u^i_5, v^i$ and $w^i$.
			The vertices $u^i_5, v^i$ and $v^i_1$ (resp. $u^i_5, w^i$ and $w^i_1$) form a path.
		\item In Part 3 there are vertices $p^i_j$, for $j \in \{1, \dots, 4\}$, forming a path, and each vertex has a sink attached. 
			Further, there are edges $\{p^i_3, v^i_3\}$, $\{p^i_3, w^i_5\}$, $\{p^i_3, u^i_7\}$, $\{p^i_1, u^i_5\}$ and $\{p^i_4, u^i_5\}$.
			If $x_i$ is $\exists$-quantified, there is the edge $\{p^i_2, v^i\}$. 
		\item Part 4 consists of four vertices, $u^i_1, u^i_2, u^i_3, u^i_4$, with $u^i_2, u^i_3$ and $u^i_4$ each having a sink attached.
			Further, there are edges $\{u^i_1, u^i_2\}$, $\{u^i_1, u^i_3\}$, $\{u^i_2, u^i_3\}$, $\{u^i_2, u^i_4\}$, $\{u^i_3, u^i_4\}$, $\{u^i_3, p^i_1\}$ and $\{u^i_4, p^i_j\}$ for $j \in \{1, \dots, 4\}$.
	\end{enumerate}
	
	\begin{restatable}{lemma}{pathlemvariable}\label{path:lem:variable}
		The online algorithm has to traverse the variable gadget for $x_i$ from $u^i_1$ to $u^i_7$ and uses either every vertex $v^i_j$ or every vertex $w^i_j$, and none of the other, respectively. 
		Furthermore, it cannot leave a variable gadget by entering one of the clauses early.
	\end{restatable}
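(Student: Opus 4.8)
The plan is to argue in a contrapositive style: every deviation from the claimed behaviour lets the adversary reply so that $t$ becomes unreachable, so a winning online algorithm is forced along the prescribed route. The engine is the \emph{sink mechanism}. In the fixed graph scenario the searcher learns a vertex's label and its incident edges only upon visiting it; hence, when it sits on a vertex $v$ carrying an attached sink and moves along an edge whose far endpoint the adversary has not yet been forced to identify, the adversary declares that endpoint to be the sink, and since the sink has degree $1$ with its only neighbour $v$ already on the path, the search sequence is stuck before $t$. I would first record the two consequences we use. (i) From a sink-attached vertex a winning strategy advances only to a neighbour already identified, and in particular it never traverses a coloured edge of \Cref{st:fig:variable}: each such edge runs from a sink-attached vertex to a vertex that, at the moment it is offered, is still unidentified and, not having been visited, looks to the searcher exactly like the sink on the vertex it stands on. (ii) Conversely the coloured reveal edges, together with the wiring of Parts 3 and 4, force the adversary to identify the next vertex of the intended route before the searcher must step onto it, so the black edges of the route are safe. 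Granting (i), the realised $s$-$t$ path uses only black edges and hence lies in the subgraph $G_B$ spanned by them; this already gives the last sentence of the lemma, because the only edges joining a variable gadget to a clause gadget are the (orange) edges out of $v^i_2$ and $w^i_4$, which are coloured, so a clause vertex is never reached while the searcher is still inside a variable gadget.

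For the rest I would use cut vertices of $G_B$. There the variable gadgets form a chain $s-u^1_1-\cdots$, gadget $i$ being attached to the rest only through $\{u^{i-1}_7,u^i_1\}$ and $\{u^i_7,u^{i+1}_1\}$ (the left attachment of gadget $1$ is $\{s,u^1_1\}$, the right attachment of the last gadget leads into the clause part). Hence $u^i_1$ and $u^i_7$ are cut vertices of $G_B$; every $s$-$t$ path in $G_B$ meets them, meets $u^i_1$ before $u^i_7$, and between these visits is a $u^i_1$-$u^i_7$ path inside the block that is the interior of gadget $i$. The lemma then follows by inspecting this block, which is purely combinatorial. From $u^i_1$ the only $G_B$-edge leaving $\{u^i_1,u^i_2,u^i_3,u^i_4\}$ other than towards $s$ is $\{u^i_4,p^i_1\}$; each $p^i_j$ has degree $2$ in $G_B$, so the path is forced through $p^i_1,p^i_2,p^i_3,p^i_4$ and then along $\{p^i_4,u^i_5\}$ (exactly what Parts 4 and 3 are for, and where the green edge identifies $v^i$ when $x_i$ is $\exists$-quantified and the cyan edges identify $v^i_3$, $w^i_5$ and $u^i_7$); at $u^i_5$ the path enters $v^i$ or $w^i$; whichever it enters, every internal vertex of the corresponding Part-1 path has degree $2$ in $G_B$, so the path runs through all of $v^i_1,\dots,v^i_5$, respectively all of $w^i_1,\dots,w^i_5$, and then to $u^i_6$; and at $u^i_6$ the only way to finish is $\{u^i_6,u^i_7\}$, since taking $\{u^i_6,w^i_5\}$ (resp.\ $\{u^i_6,v^i_5\}$) forces the path back along the other Part-1 path into the already-visited $u^i_5$. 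This is the asserted dichotomy; which of the two branches is taken is chosen by the searcher when $x_i$ is $\exists$-quantified (the green edge has identified $v^i$, so the two new neighbours of $u^i_5$ can be told apart) and is decided by the adversary's labelling when $x_i$ is $\forall$-quantified (here $u^i_5$ has no sink, so the searcher is not trapped, but it cannot distinguish $v^i$ from $w^i$).

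The step I expect to be the real work is making (i) and (ii) precise, i.e.\ carrying through Parts 4, 3, 2 and 1 in order the invariant ``at the searcher's current position the adversary has been forced to identify exactly these vertices'', so that each coloured edge — and, crucially, each sink edge — is genuinely offered among two or more mutually indistinguishable unidentified neighbours, while each black edge on the route leads to an already-identified vertex. This bookkeeping is the only place where the online, rather than purely graph-theoretic, nature of the game enters; it also has to dispose of the apparent freedom inside Part 4 (in which order $u^i_2$ and $u^i_3$ are visited) by checking that the route there is in fact forced and in every case reaches $p^i_1$ through $\{u^i_4,p^i_1\}$ with no other way out of Part 4.
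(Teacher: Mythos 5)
Your proposal is correct and rests on exactly the same mechanism as the paper's proof: the sinks force a winning searcher to move only to already-identified neighbours (so no coloured edge is ever usable), the reveal edges identify the intended route in time, and what remains of the gadget forces the traversal through $p^i_1,\dots,p^i_4$, then $u^i_5$, then exactly one of the two literal paths to $u^i_6$ and $u^i_7$. The only difference is organizational --- you factor the argument into ``the realised path lies in the black subgraph'' plus a purely combinatorial degree/cut-vertex analysis of that subgraph, whereas the paper argues part by part --- and both versions hinge on the same Part-4 subtlety you correctly flag, namely that the sink at $u^i_2$ (resp.\ $u^i_3$) forces both of these vertices to be visited before $u^i_4$, which is what identifies $p^i_1$.
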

\begin{proof}
	We actually prove the slightly stronger statement that the online algorithm first has to visit the vertices $u^i_2$ and $u^i_3$ (in any order), then $u^i_4$, then the vertices $p^i_1$ to $p^i_4$, then $u^i_5$, then either the vertices $v^i$ and all $v^i_j$ or the vertices $w^i$ and all $w^i_j$, and finally the vertices $u^i_6$ and $u^i_7$.
	For a contradiction, assume it uses any other path.
	We consider the gadget part by part.
	\begin{description}
		\item[Part 4.] Assume the algorithm skips either $u^i_2$, $u^i_3$ or both $u^i_2$ and $u^i_4$.
		Then, it has to choose a new vertex as the next vertex when it is at $u^i_2$ or $u^i_3$.
		However, these vertices both have a sink attached, which is indistinguishable from the other new vertices.
		Thus, the adversary presents the vertices such that the online algorithm chooses a sink as its next vertex.
		Since there is no way to leave the sink, this prevents the algorithm from finding an $s$-$t$ path.
		On the other hand, if the algorithm always chooses a known neighbor as its next vertex after leaving $u^i_1$, it is able to find its way to $u^i_4$, as described above.
		\item[Part 3.] Assume the algorithm skips parts of the path from $p^i_1$ to $p^i_4$.
		Consequently, it either enters this path via an edge other than the one from $u^i_4$ to $p^i_1$, or it leaves this path through any edge that is not the one from $p^i_4$ to $u^i_5$.\\
		In the first case, it has to choose a new neighbor of $u^i_4$, but since $u^i_4$ has a sink attached, it is forced to enter a sink and thus loses the game by the same arguments as above.
		Therefore, the algorithm has to use the edge $\{u^i_4, p^i_1\}$.\\
		In the second case, it again has to choose a new neighbor of any of the vertices $p^i_j$, since none of them have a common neighbor, except $u^i_4$, which is already visited, and $u^i_5$, which only has edges to $p^i_1$ and $p^i_4$.
		Thus, it loses the game due to the sinks attached to $p^i_j$.
		However, if the online algorithm always chooses a known neighbor as its next vertex from $u^i_4$ to $u^i_5$, it traverses every vertex $p^i_j$ and ends up at $u^i_5$.
		\item[Part 2.] The only choice the algorithm has in this part of the gadget is $v^i$ or $w^i$, and thus it cannot deviate from the path described above.
		\item[Part 1.] Assume the algorithm leaves the variable gadget early through a clause gadget.
		Then, it has to choose a new neighbor of $v^i_2$ or $w^i_4$.
		Both of these vertices have a sink attached, and thus the adversary can always force a loss.
		However, if the online algorithm always prefers a known neighbor, it ends up in $u^i_6$.
		From $u^i_6$, it has to choose the known neighbor $u^i_7$, since the vertex $v^i_5$ (resp. $w^i_5$) is indistinguishable from the sink attached to $u^i_6$.
	\end{description}
\end{proof}
	
	With the previous lemma, we proved that the online algorithm always has to traverse a variable gadget in a way that assigns it either true or false.
	It remains to show that the quantifiers are correctly simulated.
	
	\begin{restatable}{lemma}{pathlemvariableChoice}\label{path:lem:variableChoice}
		When simulating the TQBF game with the reduction, the following holds. 
		If a variable $x_i$ is $\exists$-quantified, the online algorithm is able to choose its truth assignment.
		On the other hand, if $x_i$ is $\forall$-quantified, the adversary is able to choose its truth assignment, and the online algorithm learns that truth assignment before choosing the next $\exists$-variable.
	\end{restatable}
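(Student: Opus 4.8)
The plan is to build on \Cref{path:lem:variable}, which already forces the searcher to arrive at $u^i_5$ having visited all of $p^i_1,\dots,p^i_4$ but neither $v^i$ nor $w^i$, and then to continue along exactly one of the two paths through the vertices $v^i_j$ or through the vertices $w^i_j$. So it suffices to bookkeep which vertices the adversary has been forced to reveal, and with how much ambiguity, by the time the searcher must move away from $u^i_5$ — the only vertex from which the choice between the two paths is made — and then to split into the two cases.

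If $x_i$ is $\exists$-quantified, I would argue as follows. The edge $\{p^i_2,v^i\}$ is present, so when the searcher visited $p^i_2$ the adversary was forced to reveal $v^i$ together with the label it receives there. When the searcher reaches $u^i_5$, its only unvisited neighbours are $v^i$ and $w^i$; the vertex $w^i$ has never been revealed before, as its only neighbours are $u^i_5$ and $w^i_1$, whereas the neighbour that is $v^i$ carries a previously seen label, and the sink attached to $p^i_2$ is a pendant and so cannot reappear as a neighbour of $u^i_5$. Hence the searcher can tell which of its two options is $v^i$ and which is $w^i$, and may deliberately enter either, i.e.\ it freely chooses the truth value of $x_i$; by \Cref{path:lem:variable} it then completes the gadget along the chosen path.

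If $x_i$ is $\forall$-quantified, then $\{p^i_2,v^i\}$ is absent, so neither $v^i$ nor $w^i$ has been revealed when the searcher arrives at $u^i_5$, and the adversary presents them as two indistinguishable new vertices. The crucial observation is that the adversary is not yet committed to which of them plays the role of $v^i$: the only earlier reveals that touch either path are the cyan edges of $p^i_3$, which expose $v^i_3$, $w^i_5$ and $u^i_7$ merely as pendant vertices of $p^i_3$, so these identities may still be fixed later. Therefore, whichever neighbour of $u^i_5$ the searcher moves to, the adversary may declare it to be $v^i$ or $w^i$ according to the $\forall$-player's strategy (which it maintains alongside the play) and then reveal the rest of that path consistently. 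To finish, I would note that the two paths look identical for the first two steps out of $u^i_5$, but after the third step the searcher stands at $v^i_2$ or at $w^i_2$, and these are distinguishable in the map (for instance $v^i_2$ has a sink attached while $w^i_2$ does not); so the searcher learns which path, hence which truth value of $x_i$, it is on. Since this happens after three steps while $u^i_6$ is seven steps away, the value is learned before the searcher leaves the gadget, in particular before reaching the next variable gadget and making its next choice for an $\exists$-variable.

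The hard part will be the $\forall$-case, where two things have to be shown at once: that no reveal preceding the searcher's move at $u^i_5$ already pins down the adversary's choice — which hinges on the observation that the cyan edges of $p^i_3$ expose only pendant vertices, leaving the path-to-value correspondence open — and that, once the adversary commits, the asymmetric placement of the decorations ($v^i_2$ being the second vertex of its path but $w^i_4$ the fourth vertex of its) lets the searcher recover the value from its local view early enough. Both directions come down to a step-by-step audit of what the adversary can and must reveal along the forced traversal from \Cref{path:lem:variable}, which is where the bulk of the argument lies.
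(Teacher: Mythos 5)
Your proposal is correct and follows essentially the same route as the paper: the $\exists$-case uses the $\{p^i_2,v^i\}$ reveal to make $v^i$ known at $u^i_5$, the $\forall$-case uses the indistinguishability of the two new neighbours to let the adversary commit after the searcher's move, and the assignment is recovered from the asymmetric placement of the decorated vertices ($v^i_2$ versus $w^i_4$) before the next $\exists$-choice. Your extra checks (that no earlier cyan reveal breaks the $v^i$/$w^i$ symmetry, and that the pendant sink of $p^i_2$ cannot be mistaken for $v^i$) are sound refinements of details the paper leaves implicit.
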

\begin{proof}
	If the variable $x_i$ is $\exists$-quantified, the variable gadget contains the edge $\{p^i_2, v^i\}$. 
	Due to \Cref{path:lem:variable}, the online algorithm always visits $p^i_2$ before $u^i_5$.
	Therefore, it can distinguish $v^i$ and $w^i$, as $v^i$ is known and $w^i$ is new, and choose the truth assignment of variable $x_i$. \\
	If the variable $x_i$ is $\forall$-quantified, that edge does not exist.
	Therefore, both $v^i$ and $w^i$ are new neighbors when the online algorithm is on $u^i_5$.
	Since they are indistinguishable, the adversary can choose the ordering in which it presents them such that the online algorithm chooses the vertex corresponding to the truth assignment the adversary wants. \\
	Once the algorithm traverses the path $v^i_j$ (resp. $w^i_j$), it learns the truth assignment from the first vertex that has at least one new neighbor and a known one:
	If $j = 2$, the path corresponds to true and if $j = 4$, the path corresponds to false.
	Therefore, it can deduce the truth assignment it chose for $x_i$ before it has to choose the truth assignment of the next $\exists$-quantified variable.
\end{proof}

\paragraph*{Clause Gadget}
\label{path:def:clauseGadget}
	The clause gadget for clause $C_i$ consists of two disconnected vertices, connected by one path for every variable they contain.
	Then, the clause gadget, also shown in \Cref{st:fig:clause}, is defined as follows:
	\begin{itemize}
		\item There are two vertices $c^i_1$ and $c^i_2$, which are not connected and each have a sink attached.
		\item For each literal $\ell \in C_i$, there are two vertices: $c^i_\ell, c^i_{\ell'}$.
		\item For each literal $\ell \in C_i$, the vertices $c^i_1, c^i_\ell, c^i_{\ell'}$ and $c^i_2$ form a path.
		\item For each literal $\ell \in C_i$, let $var(\ell)$ be the index of its corresponding variable. If $\ell$ is non-negated, there is the edge $\{c^i_\ell, v^{var(\ell)}_2\}$, and if $\ell$ is negated, there is the edge $\{c^i_\ell, w^{var(\ell)}_4\}$.
	\end{itemize}
\begin{figure}[t]
	\def\sinkcolor{red}
	\def\revealcolorid{blue}\def\revealcoloriddot{dashed}
	\def\wincolor{dkgreen}\def\wincolordot{densely dash dot}
	\def\revealcolortfe{dkgreen}\def\revealcolortfedot{densely dash dot}
	\def\revealcolorva{cyan}\def\revealcolorvadot{densely dash dot dot}
	\def\revealcolorclause{mauve}\def\revealcolorclausedot{dotted}
	\def\clausecolor{orange}\def\clausecolordot{densely dotted}
		
	\def\j{0}
	\def\k{10}
	\def\l{20}

	\centering
	\resizebox{0.73\textwidth}{!}{
		\begin{tikzpicture}[scale=0.4,
			node/.style = {shape=circle, draw, inner sep=0pt, minimum size=0.25cm},
			textnode/.style = {shape=circle, draw, inner sep=0pt, minimum size=0.4cm},
			sink/.style = {shape=circle, draw=\sinkcolor, inner sep=0pt, minimum size=0.1cm},
			box/.style = {rectangle, fill=gray!20, rounded corners, fill opacity=1}]
			\node (copysc1s) at (\k+1, -1) {};
			\node (copyc1s\j) at (\j+2, 3) {};
			\node (copyc1t\l) at (\l+4, 3) {};
			\node[box, fit=(copysc1s)(copyc1s\j)(copyc1t\l)] (box) {};
			
			\node[textnode] (c1s) at (\k+1, 0) {$c^i_1$};
			\node[textnode] (c1t) at (\k+5, 0) {$c^i_2$};
			\node (c1) at (\k+3, -0.6) {\huge$C_i$};
			\node[sink] (sc1s) at (\k+1, -1.2) {};
			\node[sink] (sc1t) at (\k+5, -1.2) {};
			\path[-, draw=\sinkcolor] (c1s) edge (sc1s) (c1t) edge (sc1t);
			\foreach \x in {\j,\k,\l} {
				\node[node] (c1s\x) at (\x+2, 3) {};
				\node[node] (c1t\x) at (\x+4, 3) {};
				\path[-] (c1s) edge (c1s\x) (c1s\x) edge (c1t\x) (c1t\x) edge (c1t);
			}
			\node[textnode] (s) at (\j-2, 0) {$s$};
			\node[textnode] (t) at (\l+8, 0) {$t$};
			\path[-, dashed] (s) edge (c1s) (c1t) edge (t);
			
			\foreach \x/\n in {\j/j,\k/k,\l/\ell} {
				\node at (\x, 6.5) {\huge$x_\n$};
				\node[node] (xt\x0) at (\x, 5) {};
				\node[node] (xt\x1) at (\x+2, 5) {};
				\node[node] (xt\x2) at (\x+4, 5) {};
				\node[node] (xt\x3) at (\x+6, 5) {};
				\path[-] (xt\x0) edge (xt\x1) (xt\x1) edge (xt\x2) (xt\x2) edge (xt\x3);
				\node[sink] (sxt\x1) at (\x+2, 6) {};
				\path[-, draw=\sinkcolor] (xt\x1) edge (sxt\x1);
				\node[inner sep=0pt] (xt\x2h) at (\x+4, 7) {};
				\path[-, draw=\revealcolorva, \revealcolorvadot, thick] (xt\x2) edge (xt\x2h);
				\path[-, draw=\clausecolor, \clausecolordot, thick] (xt\x1) edge (c1s\x);
			}
			
		\end{tikzpicture}
	}
	\caption{Clause gadget for the reduction from \qsatgame{} to \ostpg. The dashed lines indicate that there might be more gadgets in between. Only parts of the variable gadgets are shown.}
	\label{st:fig:clause}
\end{figure}
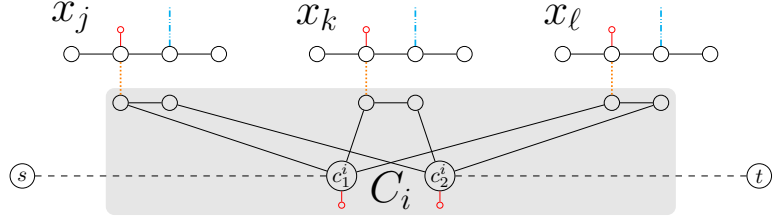
	
	\begin{restatable}{lemma}{pathlemclause}\label{path:lem:clause}
		If at least one variable satisfies $C_i$ in the assignment chosen by the online algorithm, it can find a path from $c^i_1$ to $c^i_2$.
		Otherwise, the online algorithm loses the game.
		Furthermore, it cannot use the variable gadgets to enter a not yet visited clause gadget.
	\end{restatable}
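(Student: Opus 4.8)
The plan is to analyze the clause gadget for $C_i$ by cases on the assignment produced by the online algorithm, using the structure of \Cref{path:lem:variable} and \Cref{path:lem:variableChoice} for context, and the sink mechanism as the main enforcement tool. Recall that, by the construction, the $s$-$t$ path first traverses all variable gadgets (each contributing a consistent true/false assignment by \Cref{path:lem:variable}), then all clause gadgets, and finally reaches $t$; so when the online algorithm arrives at $c^i_1$, the truth assignment of every variable is already fixed, and for each literal of $C_i$ the cyan/orange revealing edges have (or have not) revealed the corresponding vertex $c^i_j$ (resp. $c^i_k$, $c^i_\ell$) depending on whether that literal was satisfied. I would first record this fact precisely: the vertex $c^i_j$ is \emph{known} when the algorithm reaches $c^i_1$ if and only if the path through the $x_j$-gadget that satisfies $C_i$'s occurrence of $x_j$ was chosen (via the edge $\{c^i_j, v^j_2\}$ if $x_j$ occurs positively, or $\{c^i_j, w^j_4\}$ if negatively).

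For the positive direction, suppose some literal of $C_i$ — say the one on $x_j$ — is satisfied. Then $c^i_j$ is a known neighbor of $c^i_1$, so the algorithm can move $c^i_1 \to c^i_j$. From $c^i_j$ it must avoid the new neighbor $v^j_2$ (or $w^j_4$), which it can, since $c^i_{j'}$ — if already revealed — is known; I need to argue $c^i_{j'}$ is revealed, which follows because the cyan edge from the variable gadget reveals it, or alternatively I make the two clause-internal vertices mutually revealing so that stepping onto $c^i_j$ is safe. In any case the algorithm proceeds $c^i_j \to c^i_{j'} \to c^i_2$, completing a $c^i_1$–$c^i_2$ path; the vertices on the other two literal-paths are simply left unused, which is fine for a path.

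For the negative direction, suppose no literal of $C_i$ is satisfied. Then none of $c^i_j, c^i_k, c^i_\ell$ has been revealed, so every neighbor of $c^i_1$ other than its attached sink is new and indistinguishable from that sink. Hence the adversary can order the revelation of $N(c^i_1)$ so that the algorithm's move leads into the sink attached to $c^i_1$, from which there is no escape, and the algorithm loses. For the last sentence of the lemma, I would argue that the algorithm cannot reach an as-yet-unvisited clause gadget through a variable gadget's back-edge: by \Cref{path:lem:variable} the algorithm never picks a new neighbor of $v^i_2$ or $w^i_4$ (those vertices carry sinks), so it never traverses the orange edges into a clause; the only way into a clause gadget is through $c^i_1$ along the main $s$-$t$ backbone, and once a clause gadget is entered via $c^i_1$ it must be left via $c^i_2$ by the same sink argument applied to $c^i_1$ and $c^i_2$ and to the interior vertices (each $c^i_{\cdot}$ has no common neighbor with the rest except along its intended path).

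The main obstacle I expect is the bookkeeping in the positive case: ensuring that whichever literal-path the algorithm chooses to cross the gadget, \emph{both} of its internal vertices are recognizable at the moment they are needed, so that the algorithm is never forced onto a new neighbor (and thus onto a sink or into a variable gadget it has already consumed). This has to be handled by the revealing edges set up in the variable-gadget construction; I would make explicit exactly which earlier-traversed vertex reveals $c^i_{j'}$ (and argue the case where the satisfying literal belongs to a $\forall$-variable the adversary set favorably, so that the relevant reveal did occur). The rest is a routine repetition of the sink-trapping argument already used in the proof of \Cref{path:lem:variable}.
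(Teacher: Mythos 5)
Your overall approach matches the paper's: in the satisfied case the algorithm picks a known neighbor of $c^i_1$ and crosses on that literal's path, and in the unsatisfied case all neighbors of $c^i_1$ are new and indistinguishable from its sink, so the adversary traps the algorithm. The negative direction and the general structure are fine.

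However, the ``main obstacle'' you identify in the positive direction is not an obstacle, and your proposed resolution rests on a misreading of the construction. You call $v^j_2$ (resp.\ $w^j_4$) a \emph{new} neighbor of $c^i_j$ that must be avoided, and you then worry that $c^i_{j'}$ must have been pre-revealed (suggesting the cyan edges do this, or proposing to add mutual revealing edges inside the clause gadget). In fact $v^j_2$ is already \emph{visited} at this point --- the algorithm traversed it inside the variable gadget, and that traversal is precisely what revealed $c^i_j$ and made it a known neighbor of $c^i_1$. Since the solution must be a path, $v^j_2$ cannot be entered again; $c^i_j$ carries no sink; and $c^i_{j'}$ is then the \emph{unique} unvisited neighbor of $c^i_j$, so no recognition argument (and no modification of the gadget) is needed. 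The cyan edges reveal vertices inside the variable gadget, not $c^i_{j'}$. This correction also matters for the last claim of the lemma: the reason the algorithm cannot use the variable gadget to slip into another clause gadget is exactly that $v^j_2$ (resp.\ $w^j_4$) is already on the path and hence unusable --- an argument your sketch cannot make as written, because it treats that vertex as new. Finally, you should state explicitly that upon reaching $c^i_2$ the algorithm must take its known neighbor ($c^{i+1}_1$ or $t$) rather than one of the other internal vertices $c^i_{k'}, c^i_{\ell'}$, which are new and indistinguishable from the sink attached to $c^i_2$; your closing sentence only gestures at this.
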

\begin{proof}
	Let $\ell_1$ be a literal that satisfies $C_i$ in the assignment chosen by the online algorithm.
	Then, the vertex $c^i_{\ell_1}$ has been revealed while traversing the variable gadget corresponding to $\ell_1$.
	On the other hand, let $\ell_2$ be a literal that does not satisfy $C_i$ in the assignment chosen by the online algorithm.
	Then, the vertex $c^i_{\ell_2}$ has not been revealed while traversing the variable gadget corresponding to $\ell_2$ and is thus new. 
	
	Therefore if the online algorithm chooses any known neighbor of $c^i_1$, it uses a path that leads to $c^i_2$, satisfying the clause $C_i$.
	W.l.o.g. we may assume that the online algorithm assigned $\ell$ the value true.
	Then the vertex $v^{var(\ell)}_2$ has already been visited, and cannot be visited again to reach a different clause.
	Further, when at $c^i_2$, the algorithm cannot use any of the other paths connecting $c^i_1$ and $c^i_2$, since $c^i_2$ has a sink attached and the vertices $c^i_{\ell'_a}$ for any other $\ell_a \in C_i$ are new and thus indistinguishable from the sink. 
	
	If no variable satisfies $C_i$ in the assignment chosen by the online algorithm, all neighbors of $c^i_1$ are new by the arguments above.
	Thus, the adversary can force a loss since $c^i_1$ has a sink attached.
\end{proof}

\paragraph*{The Complete Reduction}
	Given a \qsatgame{} instance with formula $\varphi$ containing variables $X$ and clauses $C$ we create an instance of \ostupg{} as follows:
	For each variable and each clause, a gadget is created as described above, and the clause gadgets are connected to the variable gadgets depending on the variables they contain.
	Two distinct vertices $s$ and $t$ are created, and the algorithm has to find a path from $s$ to $t$.
	The edge $\{s, u^1_1\}$ as well as the edges $\{u^i_7, u^{i+1}_1\}$ for $i \in \{1, \dots, |X|-1\}$ are added.
	Further, the edges $\{u^{|X|}_7, c^1_1\}$, $\{c^i_2, c^{i+1}_1\}$ for $i \in \{1, \dots, |C|-1\}$ and $\{c^{|C|}_2, t\}$ are introduced.
	Finally, for $i \in \{1, \dots, |C|-1\}$, the edge $\{c^i_1, c^{i+1}_1\}$ is added (purple and loosely dotted in \Cref{st:fig:complete}) as well as the edge $\{c^{|C|}_1, t\}$.
	This is necessary, since otherwise the next vertex the algorithm is supposed to choose when at $c^i_2$, is indistinguishable from the sink attached to it.
	An example of this construction can be seen in \Cref{st:fig:complete}.
	With this, we can now prove \Cref{path:thm:uPathPspaceHard}.
\begin{figure}[t]
	\def\sinkcolor{red}
	\def\revealcolorid{blue}\def\revealcoloriddot{dashed}
	\def\wincolor{dkgreen}\def\wincolordot{densely dash dot}
	\def\revealcolortfe{dkgreen}\def\revealcolortfedot{densely dash dot}
	\def\revealcolorva{cyan}\def\revealcolorvadot{densely dash dot dot}
	\def\revealcolorclause{mauve}\def\revealcolorclausedot{dotted}
	\def\clausecolor{orange}\def\clausecolordot{densely dotted}
	\def\winpath{1}
	\def\ione{0}
	\def\itwo{9}
	\def\ithree{18}
	\def\jone{3}
	\def\jtwo{11}
	\def\jthree{19}
	\def\rowtwo{-6}
	\centering
	\resizebox{\textwidth}{!}{
		\begin{tikzpicture}[scale=0.375,
			node/.style = {shape=circle, draw, inner sep=0pt, minimum size=0.25cm},
			textnode/.style = {shape=circle, draw, inner sep=0pt, minimum size=0.4cm},
			smallnode/.style = {shape=circle, draw, inner sep=0pt, minimum size=0.1cm},
			sink/.style = {shape=circle, draw=\sinkcolor, inner sep=0pt, minimum size=0.1cm},
			box/.style = {rectangle, fill=gray!20, rounded corners, fill opacity=1, inner sep=1pt}]
			\foreach \x/\y in {1/\ione, 2/\itwo, 3/\ithree} {
				\node (ulx\x) at (\y, 2) {};
				\node (brx\x) at (\y+8, -2) {};
				\node[box, fit=(ulx\x)(brx\x)] (boxx\x) {};
				\node[below right, inner sep=3pt] (textx\x) at (boxx\x.north west) {$x_\x$};
			}
			\foreach \x/\y in {1/\jone, 2/\jtwo, 3/\jthree} {
				\node (ulc\x) at (\y-1.5, \rowtwo+1) {};
				\node (brc\x) at (\y+4.5, \rowtwo-0.5) {};
				\node[box, fit=(ulc\x)(brc\x)] (boxc\x) {};
				\node[below right, inner sep=3pt] (textc\x) at (boxc\x.north west) {$C_\x$};
			}
			
			\foreach \a in {\ione, \itwo, \ithree} {
				\node[smallnode] (id11\a) at (\a, 0) {};
				\node[smallnode] (id12\a) at (\a+0.5, 0.5) {};
				\node[smallnode] (id13\a) at (\a+0.5, -0.5) {};
				\path[-] (id11\a) edge (id12\a) edge (id13\a);
				\path[-] (id12\a) edge (id13\a);
				\node[sink] (sid12\a) at (\a+0.5, 1) {};
				\node[sink] (sid13\a) at (\a+0.5, -1) {};
				\path[-, draw=\sinkcolor] (id12\a) edge (sid12\a);
				\path[-, draw=\sinkcolor] (id13\a) edge (sid13\a);
				
				\node[smallnode] (id20\a) at (\a+1, 0) {};
				\node[smallnode] (id21\a) at (\a+2.5, 2) {};
				\path[-] (id20\a) edge (id21\a);
				\foreach \x in {2,3,4} {
					\pgfmathtruncatemacro{\y}{\x-1}
					\node[smallnode] (id2\x\a) at (\a+2.5, -1.33*\x+3.33) {};
					\path[-] (id2\y\a) edge (id2\x\a);
				}
				\foreach \x in {2,3,4} {
					\path[-, draw=\revealcolorid, \revealcoloriddot] (id20\a) edge (id2\x\a);
				}
				\node[sink] (sid20\a) at (\a+1, -0.5) {};
				\path[-, draw=\sinkcolor] (id20\a) edge (sid20\a);
				\foreach \x in {1,...,4} {
					\node[sink] (sid2\x\a) at (\a+2, -1.33*\x+3.33) {};
					\path[-, draw=\sinkcolor] (id2\x\a) edge (sid2\x\a);
				}
				\path[-] (id12\a) edge (id20\a);
				\path[-] (id13\a) edge (id20\a);
				\path[-, draw=\revealcolorid, \revealcoloriddot] (id12\a) edge (id21\a);
				
				\node[smallnode] (xd\a) at (\a+3.5, 0) {};
				\node[smallnode] (xt\a) at (\a+4, 1) {};
				\node[smallnode] (xt0\a) at (\a+4.5, 2) {};
				\node[smallnode] (xf\a) at (\a+4, -1) {};
				\node[smallnode] (xf0\a) at (\a+4.5, -2) {};
				\path[-] (xd\a) edge (xt\a) (xt\a) edge (xt0\a);
				\path[-] (xd\a) edge (xf\a) (xf\a) edge (xf0\a);
				\foreach \x in {1,...,4} {
					\pgfmathtruncatemacro{\y}{\x-1}
					\node[smallnode] (xt\x\a) at (\a+4.5+0.5*\x, 2) {};
					\path[-] (xt\y\a) edge (xt\x\a);
				}
				\foreach \x in {1,...,4} {
					\pgfmathtruncatemacro{\y}{\x-1}
					\node[smallnode] (xf\x\a) at (\a+4.5+0.5*\x, -2) {};
					\path[-] (xf\y\a) edge (xf\x\a);
				}
				\node[smallnode] (xe1\a) at (\a+7.5, 0) {};
				\node[smallnode] (xe2\a) at (\a+8, 0) {};
				\path[-] (xe1\a) edge (xt4\a) edge (xf4\a) edge (xe2\a);
				\node[sink] (sxt1\a) at (\a+5, 1.5) {};
				\path[-, draw=\sinkcolor] (xt1\a) edge (sxt1\a);
				\node[sink] (sxf3\a) at (\a+6, -1.5) {};
				\path[-, draw=\sinkcolor] (xf3\a) edge (sxf3\a);
				\node[sink] (sxe1\a) at (\a+7, 0) {};
				\path[-, draw=\sinkcolor] (xe1\a) edge (sxe1\a);
				\path[-, draw=\revealcolorid, \revealcoloriddot] (id21\a) edge (xd\a);
				\path[-] (id24\a) edge (xd\a);
				
				\ifthenelse{\a = \itwo}{}{\path[-, draw=\revealcolortfe, \revealcolortfedot] (id22\a) edge (xt\a);}
				\path[-, draw=\revealcolorva, \revealcolorvadot, out=50, in=250, out distance=3cm, in distance=2cm] (id23\a) edge (xt2\a);
				\path[-, draw=\revealcolorva, \revealcolorvadot, out=355, in=110, out distance=4cm, in distance=1cm] (id23\a) edge (xf4\a);
				\path[-, draw=\revealcolorva, \revealcolorvadot, out=0, in=225, out distance=3cm, in distance=1cm] (id23\a) edge (xe2\a);
			}
			
			\foreach \a in {\jone, \jtwo, \jthree} {
				\node[smallnode] (cs\a) at (\a+1, \rowtwo) {};
				\node[smallnode] (ct\a) at (\a+3, \rowtwo) {};
				\node[sink] (scs\a) at (\a+1, \rowtwo-0.5) {};
				\node[sink] (sct\a) at (\a+3, \rowtwo-0.5) {};
				\path[-, draw=\sinkcolor] (cs\a) edge (scs\a) (ct\a) edge (sct\a);
				\foreach \x in {0,2,4} {
					\node[smallnode] (cs\x\a) at (\a+\x-0.5, \rowtwo+1) {};
					\node[smallnode] (ct\x\a) at (\a+\x+0.5, \rowtwo+1) {};
					\path[-] (cs\a) edge (cs\x\a) (cs\x\a) edge (ct\x\a) (ct\x\a) edge (ct\a);
				}
			}
			
			\path[-, draw=\clausecolor, \clausecolordot, out=90, in=240, out distance=1cm, in distance=1cm] (cs0\jone) edge (xt1\ione);
			\path[-, draw=\clausecolor, \clausecolordot, out=90, in=250, out distance=2cm, in distance=5cm] (cs2\jone) edge (xt1\itwo);
			\path[-, draw=\clausecolor, \clausecolordot, out=90, in=250, out distance=2cm, in distance=7cm] (cs4\jone) edge (xt1\ithree);
			
			\path[-, draw=\clausecolor, \clausecolordot, out=90, in=300, out distance=1cm, in distance=3cm] (cs0\jtwo) edge (xt1\ione);
			\path[-, draw=\clausecolor, \clausecolordot, out=90, in=270, out distance=1cm, in distance=1cm] (cs2\jtwo) edge (xf3\itwo);
			\path[-, draw=\clausecolor, \clausecolordot, out=90, in=270, out distance=1cm, in distance=1cm] (cs4\jtwo) edge (xf3\ithree);
			
			\path[-, draw=\clausecolor, \clausecolordot, out=90, in=270, out distance=1cm, in distance=1cm] (cs0\jthree) edge (xf3\ione);
			\path[-, draw=\clausecolor, \clausecolordot, out=90, in=300, out distance=1cm, in distance=3cm] (cs2\jthree) edge (xt1\itwo);
			\path[-, draw=\clausecolor, \clausecolordot, out=100, in=240, out distance=2cm, in distance=1.5cm] (cs4\jthree) edge (xt1\ithree);
			
			\node[textnode] (s) at (\ione-2, 0) {$s$};
			\node[textnode] (t) at (\jthree+6, \rowtwo) {$t$};
			\node[smallnode] (help1) at (\ithree+8, -3) {};
			\node[smallnode] (help2) at (\jone-3, \rowtwo+2) {};
			\path[-] (s) edge (id11\ione) (xe2\ione) edge (id11\itwo) (xe2\itwo) edge (id11\ithree) (xe2\ithree) edge (help1);
			\path[-, out=180, in=0, out distance=0.1cm, in distance=0.1cm] (help1) edge (help2);
			\path[-, out=270, in=180, out distance=1.5cm, in distance=2.5cm] (help2) edge (cs\jone);
			\path[-] (ct\jone) edge (cs\jtwo) (ct\jtwo) edge (cs\jthree) (ct\jthree) edge (t);
			\path[-, draw=\revealcolorclause, \revealcolorclausedot, out=300, in=240, out distance=1.2cm, in distance=1.2cm] (cs\jone) edge (cs\jtwo);
			\path[-, draw=\revealcolorclause, \revealcolorclausedot, out=300, in=240, out distance=1.2cm, in distance=1.2cm] (cs\jtwo) edge (cs\jthree);
			\path[-, draw=\revealcolorclause, \revealcolorclausedot, out=300, in=240, out distance=1cm, in distance=0.8cm] (cs\jthree) edge (t);
			
			
		\end{tikzpicture}
	}
	\caption{
		Sketch of the full construction of the reduction from \qsatgame{} to \ostupg{} with the formula $\exists x_1 \forall x_2 \exists x_3 (x_1 \vee x_2 \vee x_3) \wedge (x_1 \vee \overline{x}_2 \vee \overline{x}_3) \wedge (\overline{x}_1 \vee x_2 \vee x_3)$.
	}
	\label{st:fig:complete}
\end{figure}
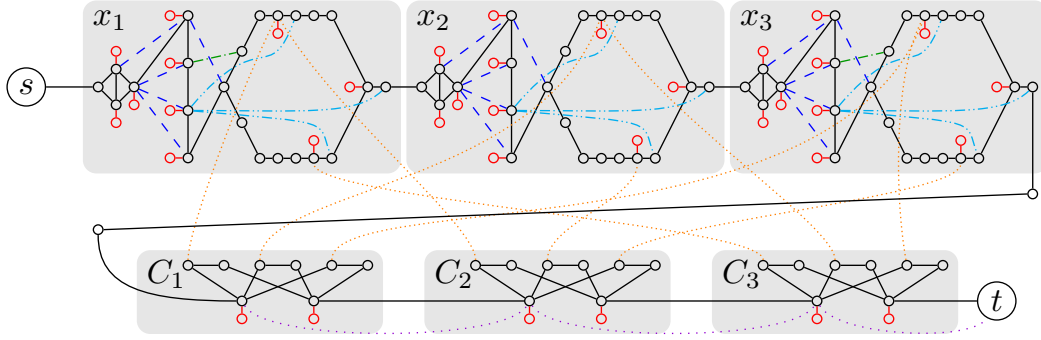
\begin{proof}[Proof of \Cref{path:thm:uPathPspaceHard}]
	\Cref{path:lem:clause} still holds when attaching the additional edges to all clauses and also no clause gadget can be skipped, since $c^i_1$ has a sink attached and $c^{i+1}_1$ is a new vertex when at $c^i_1$.
	Furthermore, the neighbor $c^{i+1}_1$ (resp. $t$ for $i = |C|$) of $c^i_2$ is distinguishable from any other neighbor of $c^i_2$, because it is the only known one.
	Thus by \Cref{path:lem:variable,path:lem:variableChoice,path:lem:clause}, the online algorithm can find a path from $s$ to $t$ in the graph described above if and only if the \qsatgame{} instance has a winning strategy for the $\exists$-player.
	All our gadgets have constant size.
	Therefore, our reduction runs in polynomial time, and the claim follows.
\end{proof}

	\paragraph*{Directed Graphs}
	Next, we show that the above ideas can also be applied to directed graphs.
	The \ostdpg{} is defined analogously to the \ostupg{}.
	
	\begin{restatable}{corollary}{pathcordPathPspaceHard}\label{path:cor:dPathPspaceHard}
		The \ostdpg{} is \PSPACE-complete.
	\end{restatable}
\begin{proof}
	We use a similar construction as for \Cref{path:thm:uPathPspaceHard}, but replace the undirected edges by directed arcs as follows:
	\begin{itemize}
		\item Any edge connecting a vertex with a sink is replaced with an arc directed towards the sink.
		\item Any edge, that only reveals vertices but cannot be used by the arguments of \Cref{path:lem:variable,path:lem:clause,path:thm:uPathPspaceHard}, is replaced with an arc directed towards the revealed vertex.
		\item Any edge excluding $\{u^i_2, u^i_3\}$, for $i \in \{1, \dots, |X|\}$, can only be traversed in one direction (by the arguments of \Cref{path:lem:variable,path:lem:clause,path:thm:uPathPspaceHard}) and is therefore replaced by the respective directed arc.
		\item The edges $\{u^i_2, u^i_3\}$, for $i \in \{1, \dots, |X|\}$, are replaced by arcs $(u^i_2, u^i_3)$ and $(u^i_3, u^i_2)$, for $i \in \{1, \dots, |X|\}$.
	\end{itemize}
	Additionally, the arcs $(c^i_1, c^{i+1}_1)$, for $i \in \{1,\dots,|C|-1\}$, and $(c^{|C|}_1, t)$ as well as the sinks attached to $c^i_2$ are removed.
	This leaves $c^i_2$ with only one outgoing arc and thus the algorithm can still find its way from $c^i_2$ to $c^{i+1}_1$.
	Then, the claim follows by the same arguments as in \Cref{path:lem:variable,path:lem:variableChoice,path:lem:clause,path:thm:uPathPspaceHard}.
\end{proof}
 
\subsection{Trails}
\label{sec:trail}
In this section, we consider online games based on \stt.
Like for paths, we consider directed and undirected graphs separately.
The definitions for \ostutg{} and \ostdtg{} are analogous to \Cref{path:def:ostupg}, replacing paths with trails.
We show \PSPACE-hardness for both problems.

For the \ostutg{}, we use a very similar reduction as for \Cref{path:thm:uPathPspaceHard}.
In the variable gadget, we replace every vertex $p^i_j$ with two vertices $p^i_j$ and $q^i_j$, connected by an edge $\{p^i_j, q^i_j\}$, with the sinks attached to $q^i_j$ instead of $p^i_j$.
The edges $\{u^i_3, p^i_1\}$, $\{u^i_4, p^i_j\}$ remain unchanged, for $j \in \{1, \dots, 4\}$.
The edges $\{p^i_j, p^i_{j+1}\}$ are replaced by $\{q^i_j, p^i_{j+1}\}$, for $j \in \{1, \dots, 3\}$.
All remaining edges starting in $p^i_j$ start in $q^i_j$ instead, for $j \in \{1, \dots, 4\}$. 
This change prevents the algorithm from visiting $u^i_4$ multiple times.

\begin{restatable}{lemma}{pathlemuTrailPspaceHard}\label{path:lem:uTrailPspaceHard}
	\ostutg{} is \PSPACE-complete.
\end{restatable}
\begin{proof}
	First, we show that \Cref{path:lem:variable} also applies to the \ostutg{} with this slight modification.
	We note that the arguments for the \ostupg{} do not change with the modification, except that the path $p^i_1$ to $p^i_4$ is replaced by the path from $p^i_1$ to $q^i_4$.
	In the following, we only argue about additional cases that arise from requiring the online algorithm to find a trail instead of a path.
	Note that, sinks still work as before, since leaving them requires using an edge twice.
	Again, we consider the variable gadget part by part.
	\begin{description}
		\item[Part 4.] Assume the algorithm returns to $u^i_4$ from $u^i_3$ or $u^i_2$, depending on which edge is still unused. 
		When at $u^i_2$, the only valid next vertex is its sink, while in $u^i_3$ the only valid moves are either the sink or $p^i_1$.
		Thus, the algorithm still either loses or ends up at $p^i_1$.
		\item[Part 3.] If the algorithm returns to Part 4 of the gadget, the adversary prevents it from reaching Part 3 again:
		Assume the algorithm returns to $u^i_3$ from $p^i_1$.
		Then, the adversary forces it to choose the sink attached to $u^i_3$ or $u^i_4$ (if that edge has not been used), since the vertices $p^i_j$ for $j > 1$ are still new neighbors of $u^i_4$.
		Now assume the algorithm returns to $u^i_4$ from $p^i_j$, for $j > 1$.
		Then, it loses the game for the same reason as above. 
		
		Otherwise if the algorithm always chooses the new neighbor when at $p^i_j$, and a known neighbor when at $q^i_j$, it ends up at $u^i_5$.
		\item[Part 2.] The only difference occurs if $x_i$ is $\exists$-quantified, since then the algorithm can use the edge $\{q^i_2, v^i\}$ to return to $q^i_2$.
		Since the only valid next move is the sink attached to $q^i_2$, the algorithm loses the game.
		\item[Part 1.] The algorithm can return to $q^i_3$ from either $v^i_2$, $w^i_4$ or $u^i_7$.
		If the algorithm returns from $v^i_2$ or $w^i_4$, all unused edges from $q^i_3$ lead to vertices that are only known from the first visit of $q^i_3$, allowing the adversary to force a loss because one of them is a sink.
		On the other hand if the algorithm returns from $u^i_7$, additionally one of the unused edges leads to the already visited vertex $v^i_2$ or $w^i_4$.
		Choosing that vertex also results in a loss, because all adjacent edges are already used.
	\end{description}
	\Cref{path:lem:variableChoice} and its proof still apply to the \ostutg{} with this modification.
	For \Cref{path:lem:clause}, we need to consider the additional case that the algorithm chooses to return to a vertex $v^j_2$ (resp. $w^j_4$) from the clause gadget of a clause $C_i$. 
	From there it can (if there are any) visit a vertex $c^{i'}_j$ it has already visited while traversing the clause gadget of clause $C_{i'}$.
	Then, it is stuck in that vertex, as all its adjacent edges have been used.
	The algorithm can also choose to visit a neighbor of $v^j_2$ (resp. $w^j_4$) that is only known from the first visit of $v^j_2$ (resp. $w^j_4$).
	Then, the algorithm loses the game, since one of them is a sink.
	Finally if we consider the connections between clause gadgets, the algorithm can return to a previous clause by using the edge $\{c^i_1, c^{i+1}_1\}$.
	However, that results in it getting stuck as it exhausted all identifiable edges leading to $c^{i+1}_1$.
	Thus by the same arguments as in \Cref{path:thm:uPathPspaceHard}, the claim follows.
\end{proof}

For the directed case, we use the same construction as in \Cref{path:cor:dPathPspaceHard}.
	
\begin{restatable}{corollary}{pathcordTrailPspaceHard}\label{path:cor:dTrailPspaceHard}
	\ostdtg{} is \PSPACE-complete.
\end{restatable}
\begin{proof}
	This directly follows from the arguments of \Cref{path:cor:dPathPspaceHard}, since the direction of the arcs forces any trail to be a path, because it is not possible to return to previous vertices.
	The exception to this are the vertices $u^i_1, u^i_2, u^i_3, u^i_4$, where the algorithm can choose to return to $u^i_2$ from $u^i_3$ or vice versa, however, it is still forced to traverse all these vertices, and thus $u^i_4$ last.
\end{proof}
	
\subsection{Walks}
\label{sec:walk}
In this section, we consider online games based on \stw. 
Like for paths, we consider directed and undirected graphs separately.
The definitions for \ostuwg{} and \ostdwg{} are analogous to \Cref{path:def:ostupg}, replacing paths with walks.

First, we show that the problem for walks on undirected graphs is much easier than for trails or paths.

\begin{restatable}{lemma}{pathlemuwgLComplete}\label{path:lem:uwgLComplete}
	\ostuwg{} is \LSPACE-complete.	
\end{restatable}
\begin{proof}
    If the graph is connected, then there is a strategy for the online algorithm.
    For example, it can always find an $s$-$t$ walk by just performing a depth first search, as that always leads to discovering $t$, since $G$ is finite.
    Thus the problem of deciding whether there is a strategy for the online algorithm is as complex as deciding whether $s$ and $t$ are in the same connected component.
    The undirected s-t connectivity problem was proven to be \LSPACE-complete in \cite{DBLP:journals/jacm/Reingold08}.
    Therefore, \ostuwg{} is also \LSPACE-complete.
\end{proof}
	
However for the \ostdwg{}, we can use the same reduction as in \Cref{path:cor:dPathPspaceHard}.
	
\begin{restatable}{corollary}{pathcordWalkPspaceHard}\label{path:cor:dWalkPspaceHard}
	\ostdwg{} is \PSPACE-complete for general graphs.
\end{restatable}
\begin{proof}
	This follows from the same argument as \Cref{path:cor:dTrailPspaceHard}.
\end{proof}
	
\begin{restatable}{lemma}{pathlemdwgScLComplete}\label{path:lem:dwgScLComplete}
	Strongly connected graphs are trivial yes-instances of \ostdwg{}.
\end{restatable}
\begin{proof}
	The statement follows from the same argument as in the proof of \Cref{path:lem:uwgLComplete}.
\end{proof}
	
This completes our analysis of online games based on \stp, \stt{} and \stw, when the graph has no costs on its edges and the solution may be of any length.
For \stp{} and \stt{}, introducing edge costs and asking for a path (resp. trail) of a specific cost obviously still results in \PSPACE-hardness.
However for the online games based on \stw{}, we show in the following that even unit edge costs make the problem \PSPACE-hard, both in the directed and undirected case.

\begin{definition}[\ostuwwg]\label{path:def:ostuwwg}\hfill\\
	\textbf{Given:} An undirected graph $G = (V, E, cost)$, two vertices $s, t \in V$ and a number $k \in \mathbb{N}$. \\
	\textbf{Question:} Does an online algorithm exist, that finds a walk from $s$ to $t$ in $G$ of total cost at most $k$ for all strategies of the adversary in the neighborhood reveal model, while knowing a map of $G$?
\end{definition}

The \ostdwwg{} is defined analogously.
We first handle the easy case of \ostdwwg.
For that, we slightly modify the reduction from \Cref{path:cor:dPathPspaceHard}.
	
\begin{restatable}{lemma}{pathlemdwwgPspaceComplete}\label{path:lem:dwwgPspaceComplete}
	\ostdwwg{} is \PSPACE-complete with unit arc costs even if the graph is strongly connected.
\end{restatable}
\begin{proof}
	If we assign every arc a cost of $1$, the path described in \Cref{path:thm:uPathPspaceHard} has a cost of $17n + 4m + 1$, where $n$ is the number of variables and $m$ is the number of clauses.
	Further, this is also the shortest path the algorithm can find, as argued in \Cref{path:cor:dPathPspaceHard}.
	On the other hand, if the formula is not satisfied, the adversary can force the algorithm to visit a sink due to \Cref{path:lem:clause}.
	Due to the direction of the arcs, any directed walk in $G$ is actually a path, except for the possibility to visit $u^i_2$ and $u^i_3$ multiple times.
	Thus, any walk the algorithm can find in $G$ is at least as long as that shortest path.
	Therefore, we set $k = 17n + 4m + 1$.\\
	To create a strongly connected graph $G'$, we first introduce a vertex $t'$. 
	Each sink as well as $t$ has an arc to $t'$.
	Finally, we add a path of length $2k$ from $t'$ back to $s$.
	Thus if the algorithm chooses a sink at any point, it is forced to use the path of length $2k$ to return to $s$ and consequently lose the game.
	Therefore, the algorithm can find a walk of length $k$ in $G'$ if and only if there is a winning strategy for the $\exists$-player in the \qsatgame{} instance.
\end{proof}
	
Next, we show that \ostuwwg{} is \PSPACE-hard.
As a first step, we show this result when $cost(e)$ is bounded by a polynomial.

\begin{restatable}{lemma}{pathlemuwWalkPspaceHard}\label{path:lem:uwWalkPspaceHard}
	\ostuwwg{} is \PSPACE-complete with edge costs bounded by a polynomial.
\end{restatable}
\begin{proof}
	Let $M_1$ and $M_2$ be two large numbers, whose exact values will be specified later.
	Let $n := |X|$ and $m := |C|$.
	We use the same reduction as for \Cref{path:thm:uPathPspaceHard}, but remove all sinks except the one attached at $c^i_1$, as they are no longer needed.
	Additionally we introduce the following edge costs:
	\begin{itemize}
		\item For all $i \in \{1,\dots,n\}$, the edges $\{u^i_3, p^i_1\}$, $\{u^i_4, p^i_j\}$ for $j \in \{2,3,4\}$, $\{p^i_1, u^i_5\}$, $\{p^i_2, v^i\}$ (if it exists), $\{p^i_3, v^i_3\}$, $\{p^i_3, w^i_5\}$ and $\{p^i_3, u^i_7\}$ have cost $M_1$.
		\item For all $i \in \{1,\dots,n\}$ and $j \in \{1,\dots,m\}$, the edges $\{v^i_2, c^j_i\}$ and $\{w^i_4, c^j_i\}$ have cost $M_1$, if they exist.
		\item For all $i \in \{1,\dots,m-1\}$, the edge $\{c^i_1, c^{i+1}_1\}$ as well as $\{c^m_1, t\}$ have cost $M_1$.
		\item For all $i \in \{1,\dots,n\}$, the edges $\{v^i, v^i_1\}, \{w^i, w^i_1\}, \{v^i_5, u^i_6\}$ and $\{w^i_5, u^i_6\}$ have cost $M_2$.
		\item All remaining edges have cost $1$.
	\end{itemize}
	The cost of any shortest path from $c^1_1$ to $t$ is $4m$.
	Furthermore if the variable assignment chosen by the algorithm satisfies the formula of the \qsatgame{} instance, the algorithm can find such a path, by the arguments of \Cref{path:lem:clause,path:thm:uPathPspaceHard}.
	On the other hand, if the assignment does not fulfill the formula, the adversary can force costs of at least $4m + 2$.
	For that, let $i$ be the index of a clause that is not satisfied.
	By the arguments of \Cref{path:lem:clause}, the vertex $c^i_1$ only has new neighbors.
	Therefore, the adversary can force the algorithm to visit the sink attached to it.
	When it returns to $c^i_1$, all edges with cost $1$ allow the algorithm to follow a path of four edges to $c^{i+1}_1$, each with cost $1$.
	In total, this induces additional costs of $2$.
	
	Now let $M_2 = 40(n+m)$ and thus $M_2$ is larger than the total cost of the path that the online algorithm can use from $c^1_1$ to $t$ if all clauses are satisfied.
	Then any shortest path from $u^1_1$ to $c^1_1$ (independent of the variable assignment) has a total cost of $n(14+2M_2) = 80nm + 80n^2 + 14n$.
	Due to \Cref{path:lem:variable,path:thm:uPathPspaceHard}, the online algorithm can find at least one such path. 
	
	We set $k = 80nm + 80n^2 + 14n + 4m + 1$ (since the algorithm also needs to traverse the edge $\{s, u^1_1\}$).
	Further let $M_1 = 10k$.
	Thus if the online algorithm uses any edge with cost $M_1$, it immediately loses the game.
	Therefore, it cannot enter the clause gadgets early, or skip (parts of) variable or clause gadgets.
	Further, using more than $2n$ edges of cost $M_2$ also immediately forces a loss, since $(2n+1)M_2 = 80nm + 80n^2 + 40m + 40n > k$.
	Since any shortest path from $u^1_1$ to $c^1_1$ uses exactly $2n$ distinct edges of cost $M_2$, using any such edge twice also loses the game.
	Thus, the algorithm cannot assign any variable both the values true and false.
	Therefore, the algorithm has to use a path as described in \Cref{path:thm:uPathPspaceHard}.
	If at least one clause is not satisfied by its chosen assignment, it has an additional cost of at least $2$ and loses the game.
	Overall, the algorithm can win the game if and only if the $\exists$-player has a winning strategy for the \qsatgame{} instance.
\end{proof}

Next, we show, how to replace edge costs not equal to $1$.

\begin{restatable}{theorem}{paththmuwwgPspaceHard}\label{path:thm:uwwgPspaceHard}
	\ostuwwg{} is \PSPACE-complete with unit edge costs.
\end{restatable}
\begin{proof}
	In the reduction of \Cref{path:lem:uwWalkPspaceHard}, the only edge costs that are not $1$ are $M_1$ and $M_2$. 
	First we present two constructions to replace these edges.
	$M_1$ is larger than $k$, the length of the walk to be found.
	Thus, our construction needs to prevent the algorithm from using the edges that had cost $M_1$ to skip parts of variable and clause gadgets.\\
	All edges with cost $M_1$ reveal a new neighbor.
	Further, the next vertex the algorithm is supposed to choose when discovering a cost $M_1$ edge is a known one.
	Therefore, we can add $M_1$ many sinks to the vertex from where the cost $M_1$ edge would be discovered.
	This allows us to replace the cost $M_1$ edge with a cost $1$ edge.
	If such a vertex has multiple cost $M_1$ edges, the sinks only need to be added once.
	Then, the algorithm can only traverse the edge that had cost $M_1$ after discovering all sinks, as the adversary can force it to visit sinks first.
	Since $M_1 > k$, that results in a loss. 
	This also means that an unsatisfied clause now induces an additional cost of $M_1$ instead of $2$.
	The edges with cost $M_2$ can be traversed at most once without inducing too much cost to be able to win the game.
	Thus, we can simply replace them by a path of $M_2$ edges with cost $1$ each. \\
	It remains to consider the case, where the algorithm traverses an edge that had cost $M_1$ from the vertex that got identified.
	This is always possible, as the other vertex of the edge has already been visited and is thus identifiable.
	However, this only induces additional costs for the algorithm, as it visits vertices it has already visited, without being able to identify neighbors of those vertices that it has not visited already.
	In particular, if it traverses an edge 
	\begin{itemize}
		\item $\{u^i_3, p^i_1\}$, $\{u^i_4, p^i_j\}$, for $j \in \{2,3,4\}$, $\{p^i_1, u^i_5\}$, where $i \in \{1,\dots,|X|\}$, it has to traverse back to $u^i_5$ without being able to visit or discover new vertices.
		\item $\{p^i_2, v^i\}$ it is able to rechoose its assignment of variable $x_i$ before revealing parts of the clause gadgets, however since this edge only exists if $x_i$ is $\exists$-quantified, it was already able to choose the assignment without additional cost.
		\item $\{p^i_3, v^i_3\}$, $\{p^i_3, w^i_5\}$ or $\{p^i_3, u^i_7\}$, it cannot rechoose its assignment of variable $x_i$ (after revealing parts of the clause gadgets), because that requires traversing three paths of length $M_2$ in this variable gadget in total. 
		That forces a loss, since it is forced to traverse exactly two of these paths per variable gadget, and $k < 2\cdot |X| \cdot M_2$.
		Thus, it has to use one of those edges that connects to a known (and thus visited) vertex to leave the variable gadget.
		\item $\{v^i_2, c^j_i\}$ or $\{w^i_4, c^j_i\}$, it cannot reenter the clause gadgets through a different variable gadget, or assign the variable the opposite value as well, by the same arguments as above.
		It can only leave by returning to a clause gadget it has already visited, again not gaining additional information.
		\item $\{c^i_1, c^{i+1}_1\}$ for $i \in \{1,\dots,|C|-1\}$, it can only return to a clause gadget it has already visited, without additional information.
	\end{itemize}
	Thus by an analogous argument to \Cref{path:lem:uwWalkPspaceHard}, the claim follows.
\end{proof}

All our results for online games based on $s$-$t$-connectivity problems are summarized in \Cref{path:tab:results}.

\begin{table}[!ht]
	\centering
	\begin{tabular}{ll|l|l}
		\multicolumn{2}{l|}{}                                    & no costs                                                              & unit costs                       \\ \hline
		\multicolumn{1}{l|}{\multirow{2}{*}{Path}}  & undirected & \multirow{2}{*}{PSPACE-complete}                                      & \multirow{2}{*}{PSPACE-complete} \\ \cline{2-2}
		\multicolumn{1}{l|}{}                       & directed   &                                                                       &                                  \\ \hline
		\multicolumn{1}{l|}{\multirow{2}{*}{Trail}} & undirected & \multirow{2}{*}{PSPACE-complete}                                      & \multirow{2}{*}{PSPACE-complete} \\ \cline{2-2}
		\multicolumn{1}{l|}{}                       & directed   &                                                                       &                                  \\ \hline
		\multicolumn{1}{l|}{\multirow{2}{*}{Walk}}  & undirected & LOGSPACE-complete                                                            & \multirow{2}{*}{PSPACE-complete} \\ \cline{2-3}
		\multicolumn{1}{l|}{}                       & directed   &  
		PSPACE-complete&                                 
	\end{tabular}
	\caption{
		The complexity of different online games based on $s$-$t$-connectivity problems.
		The unit cost cases for path and trail are trivial consequences of our results assuming no edge costs.
	}
	\label{path:tab:results}
\end{table}

\section{Hamiltonian Problems}
\label{sec:hampath}
The second class of problems are Hamiltonian problems.
These problems can be interpreted as decision variants of Graph Exploration, which ask for a search sequence visiting all vertices of the graph.
The \otsg{} is the corresponding cost variant.
We start with the \ouhpg, which is defined as follows.

\begin{definition}[\ouhpg]\label{ham:def:ouhpg}\hfill\\
	\textbf{Given:} An undirected graph $G = (V, E)$, and a vertex $s \in V$. \\
	\textbf{Question:} Does an online algorithm exist, that finds a Hamiltonian path starting from $s$ in $G$ for all strategies of the adversary in the neighborhood reveal model, while the online algorithm knows an unlabeled map of $G$?
\end{definition}

We reduce \qsatgame{} to \ouhpg{} to show the \PSPACE-completeness.
Furthermore, we derive results for variations of this problem on $s$-$t$ path, $s$-$t$ trail and $s$-$t$ walk on directed and undirected graphs.

\begin{restatable}{theorem}{hamthmuHamPathPspaceHard}
	\label{ham:thm:uHamPathPspaceHard}
	\ouhpg{} is \PSPACE-complete.
\end{restatable}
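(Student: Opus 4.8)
The plan is to adapt the reduction from \qsatgame{} to \ostupg{} (\Cref{path:thm:uPathPspaceHard}) so that the resulting instance has an $s$-$t$ Hamiltonian path if and only if the $\exists$-player wins the \qsatgame{} instance. The structural skeleton -- variable gadgets with a ``true path'' and a ``false path'' joined at a common vertex, clause gadgets with one traversal path per literal, sinks and reveal-only edges to control the online algorithm, and the sequential wiring $s \to$ variable gadgets $\to$ clause gadgets $\to t$ -- can be reused almost verbatim, since \Cref{path:lem:variable,path:lem:variableChoice,path:lem:clause} already force the online algorithm to traverse each variable gadget consistently (choosing true or false), to simulate the quantifier alternation correctly, and to traverse a clause gadget only when it is satisfied.

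The new obligation is that \emph{every} vertex must be visited, which the $s$-$t$-connectivity reduction does not guarantee: in that reduction, for each variable only one of the two literal-paths is used, only one literal-path per clause is used, and all the ``sink'' and ``reveal-only path'' vertices (Parts 3 and 4 of the variable gadget) are never visited at all. First I would handle Parts 3 and 4: by \Cref{path:lem:variable} the online algorithm is already forced to walk the entire path $p^i_1,\dots,p^i_4$ and the vertices $u^i_1,\dots,u^i_4$, so these vertices are automatically covered; I only need to make sure none of them has a genuinely unreachable sink -- so I would replace bare sinks by short pendant structures that the Hamiltonian path is forced to traverse and return from, or, more cleanly, attach each ``sink'' and each reveal-target as a degree-one pendant vertex that the Hamiltonian path must pick up (a degree-one vertex adjacent to $v$ forces the path to visit it via $v$, and if $v$ has more than one such pendant the instance is an immediate no-instance, which is exactly the ``trap'' behaviour we want). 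The harder part is the two literal-paths of each variable gadget and the unused literal-paths of each clause: the standard fix is to make the ``unused'' side coverable by the path during the \emph{same} traversal -- e.g. route the true-path so that it also sweeps the vertices of the false-path as a detour, or shrink each literal-path to a single vertex and attach the ``other'' literal's vertices as forced pendants off vertices that are visited regardless of the truth choice. One must be careful that such detours do not leak information prematurely to the online algorithm (they must stay consistent with the map and with the sink-based traps), nor create a second $s$-$t$ route; the colored reveal-only edges and sinks from the original construction are exactly the tools for this.

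The key steps, in order, are: (1) state \ouhpg{} and its membership in \PSPACE\ (follows from \Cref{prelim:thm:pspaceContainment} since \uhp{}$\in\NP$); (2) describe the modified variable gadget, arguing that the analogue of \Cref{path:lem:variable} still holds -- the algorithm is forced from $u^i_1$ to $u^i_7$, picks one literal-path, and any deviation hits a sink -- and that, additionally, \emph{all} gadget vertices lie on the forced sub-path or are forced pendants; (3) re-establish \Cref{path:lem:variableChoice} verbatim, since the quantifier simulation only uses the green edge and the indistinguishability of $v^i,w^i$, which are untouched; (4) describe the modified clause gadget and prove the analogue of \Cref{path:lem:clause}: a satisfied clause admits a Hamiltonian traversal of the whole gadget (the satisfying literal's path plus forced pendants covering the other literal-paths' vertices), an unsatisfied clause forces a sink; (5) assemble the full instance with the sequential wiring plus the ``purple'' edges $\{c^i_1,c^{i+1}_1\}$ for recognizability, and conclude that a Hamiltonian $s$-$t$ path exists iff the $\exists$-player wins, noting all gadgets are of constant size so the reduction is polynomial.

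The main obstacle I expect is step (2)/(4): making the currently-unused literal-paths mandatory-to-cover without (a) enabling the online algorithm to switch between the true- and false-paths of a variable mid-traversal and thereby produce a mixed assignment, (b) revealing the truth assignment of a $\forall$-variable to the algorithm before it commits, or (c) creating an alternative $s$-$t$ Hamiltonian route that bypasses a gadget. The safest route is probably to keep each literal-path as short as possible (ideally a single ``entry'' vertex on each side) and push all remaining ``filler'' vertices onto forced degree-one pendants hanging off vertices that are visited no matter which truth value is chosen -- Part 3 path vertices in the variable gadget, and the $c^i_1,c^i_2$ endpoints in the clause gadget -- so that coverage is decoupled from the choice and Lemmas~\ref{path:lem:variable}--\ref{path:lem:clause} carry over with only cosmetic changes.
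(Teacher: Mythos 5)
Your plan has a genuine gap at its core. In a Hamiltonian $s$\nobreakdash-$t$ path instance, every degree-one vertex must be an endpoint of the path, so no vertex other than $s$ or $t$ may have a pendant attached; and a simple path cannot ``traverse and return from'' a pendant structure hanging off an internal vertex $v$, because returning means visiting $v$ twice. Hence your main tool --- keeping the sink-based traps of the \ostupg{} reduction but converting sinks and reveal-targets into ``forced degree-one pendants'' that the path picks up --- produces instances that are trivially infeasible, independent of the formula. Since \Cref{path:lem:variable,path:lem:clause} are proved entirely via the sink mechanism, they do not ``carry over with cosmetic changes'': the adversary's entire means of punishing a wrong move disappears once sinks are removed, and you would have to replace it with a different punishment (namely, forcing the algorithm into a position from which some vertex can provably no longer be visited). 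The same problem hits your clause gadget: if a clause is satisfied by one literal, the two unused internal literal-paths of the \ostupg{}-style clause gadget cannot be swept up by pendants, and there is no second pass to collect them.

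The paper takes a different starting point that avoids all of this: it builds on the classical reduction from \sat{} to \dhp{} (via the folklore step to \uhp{}), in which each variable gadget is a chain of $4|C|$ three-vertex paths $v^i_j, m^i_j, w^i_j$ traversed in a zig-zag --- so \emph{every} vertex of the chain is visited whichever truth value is chosen, and the truth value is encoded only in the traversal direction --- and each clause gadget is a \emph{single} vertex hung off the chains, picked up as a detour exactly when some literal's traversal direction permits it. There are no sinks at all; the adversary's power comes from the indistinguishability of newly revealed vertices (e.g.\ at $u^i_8$ for $\forall$-variables, or when the algorithm deviates from the intended order), which lets it steer the algorithm into states where a Hamiltonian completion is impossible. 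The online-game machinery you correctly identify as essential (reveal-only edges so that $m^i_j$ can be distinguished from clause vertices, and a green edge present iff the variable is $\exists$-quantified) is then layered on top of that classical gadget. Your closing remark about routing one side so that it ``also sweeps the vertices of the other side'' is exactly the right instinct --- it is the zig-zag chain --- but to make the proof work you need to commit to that gadget and rebuild the traversal and quantifier lemmas around unvisitable-vertex arguments rather than sinks.
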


\paragraph*{Reduction Overview}

The reduction is based on the book of Arora and Barak \cite{DBLP:books/daglib/0023084}.
They reduce \sat{} to \dhp{} by providing gadgets for the variables and the clauses.
We additionally use the reduction from \dhp{} to \uhp{} by Karp \cite{DBLP:conf/coco/Karp72} in order to provide the hardness for the undirected case.

The variable gadget is a chain of vertices, which has to be fully explored.
This chain can be traversed in two different variations.
These variations encode the decision on the variables either to be set to true or false.
Each clause gadget is essentially one vertex, which is connected to the variable gadget chains.
The connection is done in such a way that if and only if a non-negated variable is part of the clause, the traversal variation, which encodes the assignment to true, additionally allows to traverse the clause vertex.
Overall, the variable gadgets are concatenated together.
Moreover, a vertex $s$ is added and connected to the first vertex of the first variable gadget and a vertex $t$ is added and connected to the last vertex of the last variable gadget.
Thus, a Hamiltonian $s$-$t$-path assigns all variables a truth value and visits all clause vertices.

An important part of our reduction that uses the online nature of the game are additional edges to reveal vertices in the variable gadget.
Thus, it is possible to recognize them in order to traverse the variable gadget correctly and visit the clause vertices while finding back into the same variable gadget.

\paragraph*{Variable Gadget Overview}

The variable gadget for variable $x_i$ roughly consists of three parts, as in \Cref{ham:fig:variable}.

\begin{enumerate}
	\item A chain of vertices allows for traversing it in two ways.
			This part is the same as in the reduction from \sat{} to \uhp{} and it has the same functions.
			Beginning at $u^i_8$, either $v^i_1$ (assignment to true) or $w^i_1$ (assignment to false) can be visited.
			After that, the chain is traversed either along $u^i_8, v^i_1, m^i_1,  w^i_1, v^i_2, m^i_2,  w^i_2, \ldots,w^i_{4|C|}, u^i_9, u^i_0$ or along $u^i_8, w^i_1, m^i_1,  v^i_1, w^i_2, m^i_2,  v^i_2, \ldots,v^i_{4|C|}, u^i_9, \allowbreak u^i_0$.
			These sequences visit all vertices in the chain.
	\item This part identifies the middle vertices $m^i_j$ of the original variable gadget, such that the online algorithm is able to 						distinguish the $m^i_j$ from the clause vertices.
	\item The first part ensures that all vertices $v^i_j$ and $w^i_j$ and the vertex $u^i_7$ are revealed.
			Thus, it is possible to follow the chain, because the online algorithm can distiguish the middle vertices $m^i_j$ from the vertices $v^i_j$ and $w^i_j$.
			The edge to $u^i_7$ ensures that one can safely travel from $u^i_6$ to $u^i_7$ without traveling into the variable gadget and get stuck eventually.
			At last, if the variable is $\exists$-quantified, the edge to $v^i_1$ lets the online algorithm differentiate vertices $v^i_1$ and $w^i_1$.
			Thus, the online algorithm is able to choose the assignment by either travelling to $v^i_1$ (assignment to true) or $w^i_1$ (assignment to false).
\end{enumerate}

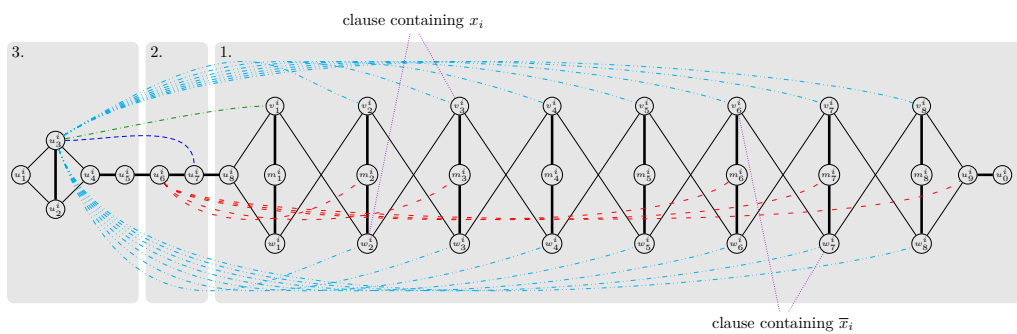
\begin{figure}[!ht]
	\centering
	\resizebox{\textwidth}{!}{
		\def\revealcolorid{blue}\def\revealcoloriddot{densely dashed}
		\def\revealcolortfe{dkgreen}\def\revealcolortfedot{dash dot}
		\def\revealcolorv{cyan}\def\revealcolorvdot{dash dot dot}
		\def\revealcolorm{red}\def\revealcolormdot{loosely dashed}
		\def\revealcolorw{cyan}\def\revealcolorwdot{dash dot dot}
		\def\clausecolor{mauve}\def\clausecolordot{densely dotted}
		\begin{tikzpicture}[scale=1,
			box/.style = {rectangle, fill=gray!20, rounded corners, fill opacity=1, inner sep=5pt}]
			\node (b11) at (4, 2.6) {};
			\node (b12) at (21, -2.5) {};
			\node[box, fit=(b11)(b12)] (box1) {};
			\node[below right, inner sep=3pt] at (box1.north west) {1.};
			\node (b31) at (2.5, 2.6) {};
			\node (b32) at (3.25, -2.5) {};
			\node[box, fit=(b31)(b32)] (box2) {};
			\node[below right, inner sep=3pt] at (box2.north west) {2.};
			\node (b41) at (-0.5, 2.6) {};
			\node (b42) at (1.75, -2.5) {};
			\node[box, fit=(b41)(b42)] (box3) {};
			\node[below right, inner sep=3pt] at (box3.north west) {3.};
			\node[shape=circle, draw=black, inner sep=0pt, minimum size=0.25cm] (id1) at (-0.5, 0) {\tiny$u^i_1$};
			\node[shape=circle, draw=black, inner sep=0pt, minimum size=0.25cm] (id2) at (0.25, 0.75) {\tiny$u^i_3$};
			\node[shape=circle, draw=black, inner sep=0pt, minimum size=0.25cm] (id3) at (0.25, -0.75) {\tiny$u^i_2$};
			\node[shape=circle, draw=black, inner sep=0pt, minimum size=0.25cm] (id4) at (1, 0) {\tiny$u^i_4$};
			\node[shape=circle, draw=black, inner sep=0pt, minimum size=0.25cm] (id5) at (1.75, 0) {\tiny$u^i_5$};
			\node[shape=circle, draw=black, inner sep=0pt, minimum size=0.25cm] (id6) at (2.5, 0) {\tiny$u^i_6$};
			\node[shape=circle, draw=black, inner sep=0pt, minimum size=0.25cm] (id7) at (3.25, 0) {\tiny$u^i_7$};

			\path[-] (id1) edge (id2) edge (id3) (id4) edge (id2) edge (id3);
			\path[-, ultra thick] (id2) edge (id3);
			\path[-, ultra thick] (id4) edge (id5) (id5) edge (id6) (id6) edge (id7);
			
			\pgfmathtruncatemacro{\a}{2}
			\node[shape=circle, draw=black, inner sep=0pt, minimum size=0.25cm] (v1d) at (4, 0) {\tiny$u^i_8$};
			\node[shape=circle, draw=black, inner sep=0pt, minimum size=0.25cm] (v1e) at (20, 0) {\tiny$u^i_9$};
			\node[shape=circle, draw=black, inner sep=0pt, minimum size=0.25cm] (v1e2) at (20.75, 0) {\tiny$u^i_{0}$};
			\path[-, ultra thick] (v1e) edge (v1e2);
			\foreach \x in {1,...,8} {
				\foreach \y in {1,2,3} {
					\ifthenelse{\y = 1}{
						\def\name{\tiny$w^i_{\x}$}
					}{}
					\ifthenelse{\y = 2}{
						\def\name{\tiny$m^i_{\x}$}
					}{}
					\ifthenelse{\y = 3}{
						\def\name{\tiny$v^i_{\x}$}
					}{}
					
					\node[shape=circle, draw=black, inner sep=0pt, minimum size=0.25cm] (v1\x\y) at (3+\a*\x, 1.5*\y-3) {\name};
					\pgfmathtruncatemacro{\h}{\y-1}
					\ifthenelse{\h > 0}{
						\path[-, ultra thick] (v1\x\h) edge (v1\x\y);
					}{}
				}
				\pgfmathtruncatemacro{\h}{\x-1}
				\ifthenelse{\h > 0}{
					\ifthenelse{\h = 2 \OR \h = 6}{
						\path[-] (v1\h1) edge (v1\x3) (v1\h3) edge (v1\x1);
					}{
						\path[-] (v1\h1) edge (v1\x3) (v1\h3) edge (v1\x1);
					}
				}{}
			}
			\path[-, ultra thick] (id7) edge (v1d);
			\path[-] (v1d) edge (v111) edge (v113);
			\path[-] (v1e) edge (v181) edge (v183);
			
			\path[-, draw=\revealcolortfe, \revealcolortfedot, out=10, in=180, out distance=1cm, in distance=1cm] (id2) edge (v113);
			\path[-, draw=\revealcolorid, \revealcoloriddot, out=0, in=90, out distance=1cm, in distance=1cm] (id2) edge (id7);
			\path[-, draw=\revealcolorv, \revealcolorvdot, out=32, in=160, out distance=5cm, in distance=2cm] (id2) edge (v123);
			\path[-, draw=\revealcolorv, \revealcolorvdot, out=32, in=160, out distance=5cm, in distance=2cm] (id2) edge (v133);
			\path[-, draw=\revealcolorv, \revealcolorvdot, out=32, in=160, out distance=5cm, in distance=2cm] (id2) edge (v143);
			\path[-, draw=\revealcolorv, \revealcolorvdot, out=32, in=160, out distance=5cm, in distance=2cm] (id2) edge (v153);
			\path[-, draw=\revealcolorv, \revealcolorvdot, out=32, in=160, out distance=5cm, in distance=2cm] (id2) edge (v163);
			\path[-, draw=\revealcolorv, \revealcolorvdot, out=32, in=160, out distance=5cm, in distance=2cm] (id2) edge (v173);
			\path[-, draw=\revealcolorv, \revealcolorvdot, out=32, in=160, out distance=5cm, in distance=2cm] (id2) edge (v183);
			\path[-, draw=\revealcolorw, \revealcolorwdot, out=290, in=200, out distance=5cm, in distance=2cm] (id2) edge (v121);
			\path[-, draw=\revealcolorw, \revealcolorwdot, out=290, in=200, out distance=5cm, in distance=2cm] (id2) edge (v131);
			\path[-, draw=\revealcolorw, \revealcolorwdot, out=290, in=200, out distance=5cm, in distance=2cm] (id2) edge (v141);
			\path[-, draw=\revealcolorw, \revealcolorwdot, out=290, in=200, out distance=5cm, in distance=2cm] (id2) edge (v151);
			\path[-, draw=\revealcolorw, \revealcolorwdot, out=290, in=200, out distance=5cm, in distance=2cm] (id2) edge (v161);
			\path[-, draw=\revealcolorw, \revealcolorwdot, out=290, in=200, out distance=5cm, in distance=2cm] (id2) edge (v171);
			\path[-, draw=\revealcolorw, \revealcolorwdot, out=290, in=200, out distance=5cm, in distance=2cm] (id2) edge (v181);
			\path[-, draw=\revealcolorm, \revealcolormdot, out=300, in=220, out distance=1cm, in distance=2cm] (id6) edge (v122);
			\path[-, draw=\revealcolorm, \revealcolormdot, out=300, in=220, out distance=1cm, in distance=2cm] (id6) edge (v132);
			\path[-, draw=\revealcolorm, \revealcolormdot, out=300, in=220, out distance=1cm, in distance=2cm] (id6) edge (v162);
			\path[-, draw=\revealcolorm, \revealcolormdot, out=300, in=220, out distance=1cm, in distance=2cm] (id6) edge (v172);
			\path[-, draw=\revealcolorm, \revealcolormdot, out=300, in=220, out distance=1cm, in distance=2cm] (id6) edge (v1e);

			\foreach \x in {2} {
				\node (c\x) at (4+\a*\x, 3.35) {clause containing $x_i$};
				\node (c\x1) at (3.95+\a*\x, 3.2) {};
				\node (c\x2) at (4.05+\a*\x, 3.2) {};
				\pgfmathtruncatemacro{\h}{\x+1}
				\path[-, draw=\clausecolor, \clausecolordot] (c\x1) edge (v1\x1) (c\x2) edge (v1\h3);
			}
			\foreach \x in {6} {
				\pgfmathtruncatemacro{\c}{\x}
				\node (c\x) at (4+\a*\x, -3.25) {clause containing $\overline{x}_i$};
				\node (c\x1) at (3.95+\a*\x, -3.1) {};
				\node (c\x2) at (4.05+\a*\x, -3.1) {};
				\pgfmathtruncatemacro{\h}{\x+1}
				\path[-, draw=\clausecolor, \clausecolordot] (c\x1) edge (v1\x3) (c\x2) edge (v1\h1);
			}
		\end{tikzpicture}
	}
	\caption{Variable gadget for the reduction from \qsatgame{} to \ohpg.
			The thick solid black edges have to be traversed by the online algorithm.
			The thin solid black edges are optionally traversable.
			The dotted purple edges are used to simulate the satisfaction of a clause by traversing them to visit the clause vertex. 
			The dot-dashed green edge only exists if the variable is $\exists$-quantified. 
			All other non-solid colored edges cannot be used, and only exist for later recognition of the vertex, it is connected to.}
	\label{ham:fig:variable}
\end{figure}

\paragraph*{Formal Definition of the Variable Gadget}\label{ham:def:varGadget}
\begin{enumerate}
	\item The online algorithm enters at vertex $u^i_8$ and ends at $u^i_0$.
	In between are $4|C|$ paths of length three $v^i_j, m^i_j, w^i_j$, for $j \in \{1, \ldots, 4|C|\}$, whereby $v^i_k$ is connected to $w^i_{k+1}$ and $w^i_k$ is connected to $v^i_{k+1}$, for $k \in \{1, \ldots, 4|C|-1\}$.
	Besides, $u^i_8$ is connected to $v^i_1$ and $w^i_1$, $v^i_{4|C|}$ and $w^i_{4|C|}$ to $u^i_9$ and $u^i_9$ to $u^i_0$.
	\item This part consists of two vertices $u^i_6$ and $u^i_7$, whereby $u^i_6$ is connected to $u^i_7$ and $u^i_7$ is connected to $u^i_8$.
	Furthermore, $u^i_6$ is connected to $u^i_9$ and $m^i_j$ if and only if $j \mod 4 \geq 2$.
	\item There are five vertices $u^i_1, u^i_2, u^i_3, u^i_4, u^i_5$ with edges $\{u^i_1, u^i_2\}$, $\{u^i_1, u^i_3\}$, $\{u^i_2, u^i_3\}$, $\{u^i_2, u^i_4\}$, $\{u^i_3, u^i_4\}$, $\{u^i_4, u^i_5\}$.
	Additionally, $u^i_5$ is connected to $u^i_6$ and $u^i_3$ is connected to $u^i_7$ and to $v^i_j$ and $w^i_j$, for $j \in \{2, \ldots, 4|C|\}$.
	At last, if the variable is $\exists$-quantified, $u^i_3$ is connected to $v^i_1$.
\end{enumerate}

\paragraph*{Clause Gadget}\label{ham:def:clauseGadget}
The clause gadget for a clause $\{x_a, x_b, x_c\} = C_{k} \in C$ is one vertex $c_k$ connected with two edges to each of the variable gadgets of $x_a, x_b, x_c$.
Then, the two edges $\{c_{k}, w^i_{4k-2}\}$ and $\{c_{k}, v^i_{4k-1}\}$ are added, if the variable is non-negated, and $\{c_{k}, v^i_{4k-2}\}$ and $\{c_{k}, w^i_{4k-1}\}$, if the variable is negated.
The clause gadget is shown in \Cref{ham:fig:clause}.

\begin{figure}[!ht]
	\centering
	\resizebox{0.7\textwidth}{!}{
		\def\sinkcolor{red}
		\def\revealcolorid{blue}
		\def\revealcolortfe{dkgreen}
		\def\revealcolortff{green}
		\def\revealcolorva{mauve}\def\revealcolorvadot{densely dotted}
		\def\revealcolorv{cyan}\def\revealcolorvdot{dash dot dot}
		\def\revealcolorvt{cyan}
		\def\revealcolorvf{orange}
		\def\clausecolor{gray}
		\begin{tikzpicture}[scale=0.6]
			\node[shape=circle, draw=black, inner sep=0pt, minimum size=0.25cm] (c11) at (0, 1) {};
			\node (c1) at (0, 2) {\huge $c_1$};
			
			\foreach \x in {1,2,3} {
				\pgfmathtruncatemacro{\hx}{\x-2}

                \ifthenelse{\x = 1}{
					\def\dx{a}
				}{}
				\ifthenelse{\x = 2}{
				    \def\dx{b}
				}{}
				\ifthenelse{\x = 3}{
					\def\dx{c}
				}{}
    
				\foreach \y in {1,2} {
					\pgfmathtruncatemacro{\hy}{\y-1}
					\foreach \z in {1,2,3} {
						\ifthenelse{\y = 1}{
								\ifthenelse{\z = 1}{
									\def\name{\small$v^\dx_{4-2}$}
								}{}
								\ifthenelse{\z = 2}{
									\def\name{\small$m^\dx_{4-2}$}
								}{}
								\ifthenelse{\z = 3}{
									\def\name{\small$w^\dx_{4-2}$}
								}{}
						}{
								\ifthenelse{\z = 1}{
									\def\name{\small$v^\dx_{4-1}$}
								}{}
								\ifthenelse{\z = 2}{
									\def\name{\small$m^\dx_{4-1}$}
								}{}
								\ifthenelse{\z = 3}{
									\def\name{\small$w^\dx_{4-1}$}
								}{}
						}
						
						\pgfmathtruncatemacro{\hz}{\z-1}
						\node[shape=circle, draw=black, inner sep=0pt, minimum size=0.25cm, label=right:\name] (v\x\y\z) at (8*\hx+4*\hy-2, 10-2*\z) {};
						\ifthenelse{\z > 1}{
							\path[-, ultra thick] (v\x\y\hz) edge (v\x\y\z);
						}{}
					}
					\foreach \a in {1,3} {
						\ifthenelse{\a > 2}{
							\ifthenelse{\y > 1}{\def\ha{1}}{\def\ha{-1}}
						}{\def\ha{0}}
						\node (vh\x\y\a) at (8*\hx+4*\hy-2+\ha, 11.5-2*\a) {};
						\path[-, very thick, draw=\revealcolorv, \revealcolorvdot] (vh\x\y\a) edge (v\x\y\a);
					}
				}
				\path[-] (v\x13) edge (v\x21) (v\x11) edge (v\x23);
				\node (v\x) at (8*\hx, 9) {\large $x_\dx$};
			}
			
			\path[-, thick, draw=\revealcolorva, \revealcolorvadot, in=300, out=180, looseness=0.5] (c11) edge (v113);
			\path[-, thick, draw=\revealcolorva, \revealcolorvadot, out=160, in=240, out distance=7cm, in distance=5cm] (c11) edge (v121);
			\path[-, thick, draw=\revealcolorva, \revealcolorvadot, out=150, in=225, out distance=3cm, in distance=3cm] (c11) edge (v211);
			\path[-, thick, draw=\revealcolorva, \revealcolorvadot, out=30, in=260, out distance=1cm, in distance=1cm] (c11) edge (v223);
			\path[-, thick, draw=\revealcolorva, \revealcolorvadot, out=20, in=230] (c11) edge (v313);
			\path[-, thick, draw=\revealcolorva, \revealcolorvadot, out=10, in=255, out distance=9cm, in distance=7cm] (c11) edge (v321);
		\end{tikzpicture}
		}
	\caption{Clause gadget for the reduction from \qsatgame{} to \ohpg. Only parts of the variable gadgets are shown. The dash dot dotted cyan edges reveal the vertex that can be used to return to the variable gadget. The online algorithm is able to travel over the dotted purple edges to simulate the satisfaction of a clause.}
	\label{ham:fig:clause}
\end{figure}
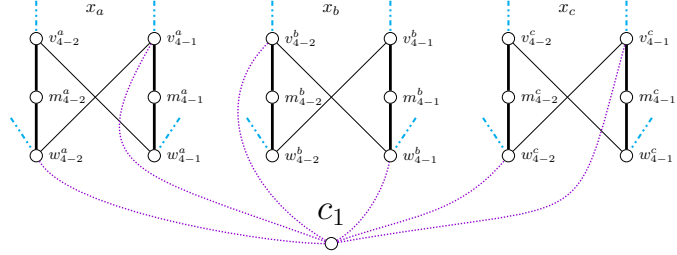

Now, we prove that the online algorithm traverses the variable gadget correctly while it is able to visit each clause vertex that is fulfilled by the assignment of the variable.
Additionally, we show that there is a one-to-one correspondence between the quantification of the variables in \qsat{} and the reduction.

First, we begin with the possible traversal order of each variable gadget.

\begin{restatable}{lemma}{hamlemvariable}\label{ham:lem:variable2}
	The online algorithm has to completely traverse Part 3 of the variable gadget, then Part 2, and then enter Part 1 through $u^i_8$ for the first time.
	In Part 1, it traverses the paths $v^i_j, m^i_j, w^i_j$ for $j \in \{1,\dots,4|C|\}$.
	If the variable is set to true, it traverses them in the order $v^i_j, m^i_j, w^i_j$, and otherwise in the order $w^i_j, m^i_j, v^i_j$, using the edge $\{w^i_j, v^i_{j+1}\}$ (resp. $\{v^i_j, w^i_{j+1}\}$) to get to the next path.
	If there is a vertex $c_{i_k}$ attached to $w^i_j$ and not $v^i_{j-1}$ (resp. $v^i_j$ and not $w^i_{j-1}$), it uses that vertex instead to reach the next path.
	Finally, it reaches $u^i_9$ and then leaves the variable gadget using $u^i_0$.
	Furthermore, it cannot leave a variable gadget via a clause vertex.
\end{restatable}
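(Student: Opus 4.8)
The plan is to follow the part-by-part strategy of the variable-gadget lemma for \ostupg{}, replacing every appeal to a sink by a Hamiltonicity argument. Let $P$ be any Hamiltonian $s$-$t$ path the online algorithm manages to produce. I will use repeatedly that (H1) every vertex other than $s$ and $t$ is incident to exactly two edges of $P$, and $s,t$ to exactly one; (H2) $P$ contains no cycle; and (H3) consequently $P$ cannot be completed once some not-yet-visited internal vertex has no incident edge that is unused and leads to an unvisited vertex, while if such a vertex has exactly one such edge then $P$ must use it. The online nature is needed only at the single free choice the statement allows, namely which of $v^i_1,w^i_1$ is visited from $u^i_8$; everywhere else the map together with the reveal edges (blue, cyan, red in Figure~\ref{ham:fig:variable}) makes the next vertex recognizable, so the ``forced'' moves really are forced.

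\textbf{Parts 3 and 2.} First I would observe that $P$ enters the gadget of $x_i$ through the unique edge at $u^i_1$; then (H1), the triangle on $u^i_1,u^i_2,u^i_3$, and (H2) force $P$ to cross $\{u^i_1,u^i_2,u^i_3\}$ as a sub-path (in either order of $u^i_2,u^i_3$), after which $u^i_4$ has only $u^i_5$ available and $u^i_5$ only $u^i_6$; moreover $u^i_2,u^i_4,u^i_5$ cannot be covered ``elsewhere'' because all their neighbours lie inside Part 3. The sole remaining escape is that $u^i_3$ leaves this sub-path via a reveal edge into $u^i_7$ or the chain; I would rule this out by tracing the induced forced moves until a middle vertex --- typically $m^i_1$, which after Part 3 has used $\{u^i_5,u^i_6\}$ has no neighbour outside its own column --- violates (H3). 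Next, at $u^i_6$: skipping $u^i_7$ would make $u^i_7$ (whose only other neighbour is $u^i_8$) a dead end, contradicting (H1); hence $P$ uses $\{u^i_6,u^i_7\}$ and then $\{u^i_7,u^i_8\}$, entering Part 1 at $u^i_8$ as its first Part-1 vertex, since the edges $\{u^i_6,m^i_j\}$ and $\{u^i_6,u^i_9\}$ all go unused.

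\textbf{The zig-zag through Part 1.} From $u^i_8$ the only available neighbours are $v^i_1,w^i_1$; say $P$ goes to $v^i_1$ (the other case is the mirror image). I would then show by induction on $j$ that $P$ crosses column $j$ as $v^i_j,m^i_j,w^i_j$: at $v^i_j$ the only moves are to $m^i_j$ or along a cross edge, and the cross-edge move would leave $m^i_j$ with only the edge $\{m^i_j,w^i_j\}$ usable (its neighbour $u^i_6$ being used up by Part 3), which then cascades by (H1)--(H3) to a stranded vertex or a cycle; so $P$ takes $m^i_j$, and from $m^i_j$ the same reasoning forces $w^i_j$. From $w^i_j$, $P$ leaves the column via the cross edge to $v^i_{j+1}$, unless a clause vertex $c_{i_k}$ is attached to $w^i_j$ but not to $v^i_{j-1}$, in which case $P$ reaches $v^i_{j+1}$ through the detour $w^i_j,c_{i_k},v^i_{j+1}$ ($v^i_{j+1}$ being the only other neighbour of $c_{i_k}$ in this gadget), and must do so whenever $c_{i_k}$ cannot be visited through any other variable gadget, again by (H3). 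At column $4|C|$ the cross edge reaches $u^i_9$, whose only remaining neighbour is $u^i_0$, so $P$ leaves the gadget through $u^i_0$. Together with Parts 3 and 2, this gives the first half of the statement.

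\textbf{No early exit, and the main obstacle.} Finally, to show $P$ cannot leave the gadget through a clause vertex, suppose it went from some $v^i_j$ or $w^i_j$ to a clause vertex whose other $P$-edge leaves to a different variable gadget; then that $v^i_j$ behaves as an endpoint of the restriction of $P$ to the chain of $x_i$, while Parts 3 and 2 still force the chain to be entered at $u^i_8$ and to have unique exit $u^i_0$, so replaying the forced moves as above strands some $m^i_\ell$ or one of $u^i_9,u^i_0$, contradicting (H3). I expect the real work to lie precisely in these stranding/cascade arguments --- the last case of the Part-3 analysis and the no-early-exit case --- because in each deviating traversal one must pin down a concrete vertex that becomes unreachable; this is exactly where the blue, cyan and red reveal edges pay off, as they make the relevant vertices recognizable at just the moments when the algorithm must not be lured into a ``wrong'' move.
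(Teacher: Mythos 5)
Your outline takes a genuinely different route from the paper's for the ``must''-direction: you try to prove the purely offline statement that \emph{every} Hamiltonian $s$-$t$ path of the constructed graph traverses the gadget in the prescribed way, so that any deviation is immediately fatal by a degree/stranding argument. The paper does not prove this; it mixes offline stranding arguments with \emph{adversary-forcing} arguments, and the forcing is used precisely in the cases you yourself flag as ``the real work''. Concretely, in the case where the algorithm skips $u^i_3$ (reaching $u^i_6$ via $u^i_1,u^i_2,u^i_4,u^i_5$), the vertex $u^i_3$ can still in principle be covered later through $u^i_7$ and/or the reveal edges into the chain, so there is no local vertex that is obviously stranded; ruling this out offline requires a global analysis of how the ladder of columns, the clause vertices (which also connect to two \emph{other} variable gadgets), $u^i_8$, $u^i_9$ and the edges $\{u^i_6,m^i_j\}$ can be pieced into path segments. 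The paper sidesteps this: because $u^i_3$ was skipped, its reveal edges never fired, so at $u^i_6$ the neighbours $u^i_7$, $u^i_9$ and the $m^i_j$ are all new and indistinguishable, and the adversary simply routes the algorithm to $u^i_9$ and then $u^i_0$, after which a comparatively short partition argument finishes the case. Your proposal never invokes this mechanism, and your ``cascade to a stranded $m^i_1$'' does not apply to the sub-case $u^i_3$--$u^i_7$--$u^i_8$ (there $m^i_1$ is covered normally); the same issue recurs in your no-early-exit paragraph. So either you must actually carry out the global offline case analysis (which may well be true but is substantially more than (H1)--(H3) plus ``tracing forced moves''), or you need the adversary argument the construction was designed for.

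The second gap is the feasibility half. The lemma also asserts that the algorithm \emph{does} reach $u^i_9$ and $u^i_0$, visiting clause vertices of satisfied clauses on the way, and the paper devotes a full paragraph to showing the algorithm can identify the correct next vertex at every step: $m^i_j$ is recognized (for $j \bmod 4 \ge 2$) as the unique candidate that the map shows adjacent to $u^i_6$, a clause vertex is recognized as the new neighbour at $w^i_j$ with two known neighbours, the return from a clause vertex goes to the known neighbour of the previous step, and so on. You compress all of this into one sentence (``the reveal edges make the next vertex recognizable''), but this recognizability is not uniform -- it depends on which reveal edges exist at which columns and on whether the clause is satisfied -- and it is exactly what \Cref{ham:lem:clause2} later relies on. A complete proof needs this direction spelled out, not just the impossibility of deviations.
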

\begin{proof}
	We consider the gadget part by part.
	\begin{description}
		\item[Part 3.] Assume the algorithm enters Part 2 of the variable gadget with at least one vertex of Part 3 not yet visited.
		There are three cases:
		It entered from $u^i_5$ and $u^i_3$ has been visited, it entered from $u^i_5$ and $u^i_3$ is not yet visited, or it entered from $u^i_3$ and $u^i_5$ is not visited. 
		
		In the first case, the vertex $u^i_2$ has not been visited and cannot be visited anymore.
		Thus the algorithm loses the game as it can no longer produce a Hamiltonian path.
		
		In the second case, it enters Part 2 in vertex $u^i_6$, and $u^i_7$ is still a new vertex.
		Then, the adversary can force the algorithm to visit $u^i_9$ next, as all neighbors of $u^i_6$ are new and thus indistinguishable.
		Since $u^i_3$ is not visited yet, all neighbors of $u^i_9$ are new and thus the adversary can force the algorithm to visit $u^i_0$ next.
		Then the remaining variable gadget has to be entered from one clause vertex $c_a$ and left through another clause vertex $c_b$.
		This partitions the variable gadget into three parts.
		The vertex $u^i_3$ can only be used once to jump between two parts, always forcing the algorithm to get stuck in the paths $v^i_1, m^i_1, w^i_1$ or $v^i_{4|C|}, m^i_{4|C|}, w^i_{4|C|}$, as they have no clause gadget attached, or to leave one of the parts unexplored.
		The same argument can then be applied recursively to the unexplored part.
		
		In the third case, it enters Part 2 in vertex $u^i_7$.
		Since both $u^i_6$ and $u^i_8$ are new in that case, the adversary can force the algorithm to visit $u^i_6$ next.
		Then $u^i_5$ is not visited yet, and visiting it makes the algorithm stuck in Part 3.
		On the other hand, not visiting $u^i_5$ produces a result that is not a Hamiltonian path.
		
		We do not need to consider the case of the algorithm skipping Part 2, as that can only be done from $u^i_3$, and then the adversary can always force the algorithm to enter $u^i_7$ and force a loss as described above.
		\item[Part 2.] Assume the algorithm leaves Part 2 from $u^i_6$. 
		Then, the adversary can force a loss by the same arguments as above.
		\item[Part 1.] If the algorithm deviates from the described path, the adversary can force a loss by the same argument as in the original reduction from \sat{} to \uhp{}, since all additional vertices of the variable gadget have already been visited and cannot be used again.
	\end{description}
	Finally, we show that the algorithm can actually find the described path:
	For Part 3 and Part 2 of the gadget, it can always choose a known neighbor when possible, which makes it end up in $u^i_8$.
	In Part 1, when the algorithm is on $v^i_j$ (resp. $w^i_j$) for $j \mod 4 \leq 1$, it has to choose the new vertex to reach $m^i_j$, unless it has already been visited, in which case it chooses the known neighbor $w^i_{j+1}$ (resp. $v^i_{j+1}$), or $u^i_9$ if $j = 4|C|$. 
	When the algorithm is on $v^i_j$ (resp. $w^i_j$) for $j \mod 4 \geq 2$ and there are two known neighbors, one of them is $m^i_j$ and has to be chosen, which can be done since it is the only possible choice that is a neighbor of $u^i_6$.
	Otherwise, if there is one known neighbor and a new neighbor, the new neighbor is a clause vertex and has to be chosen.
	Otherwise, if there are two known neighbors and one of them is a clause vertex, which has not been visited yet, it has to be chosen.
	From the clause vertex the algorithm can return to the known neighbor of the last step to continue traversing the variable gadget.
	When the algorithm is in a vertex $m^i_j$, there is always only one choice.
\end{proof}

Next, we prove that the quantification in the reduction graph corresponds to the quantification in the \qsat{}-instance.

\begin{restatable}{lemma}{hamlemvariableChoice}\label{ham:lem:variableChoice}
	If a variable $x_i$ is $\exists$-quantified, the online algorithm is able to choose its truth assignment.
	On the other hand if $x_i$ is $\forall$-quantified, the adversary is able to choose its truth assignment, and the online algorithm learns that truth assignment before choosing the next $\exists$-variable.
\end{restatable}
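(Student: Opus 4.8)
The plan is to follow the proof of \Cref{path:lem:variableChoice} almost verbatim, with \Cref{ham:lem:variable2} as the foundation. By that lemma, whenever the online algorithm reaches the gadget of $x_i$ it has fully traversed Parts~3 and~2 and enters Part~1 at $u^i_8$ for the first time, where its only unvisited neighbours are $v^i_1$ and $w^i_1$; moving to $v^i_1$ first encodes assigning $x_i$ the value true, moving to $w^i_1$ first encodes false. So the lemma reduces to deciding who controls this single choice and --- in the $\forall$-case --- showing that the algorithm recovers the adversary's choice quickly enough.

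For the $\exists$-case I would use the green edge $\{u^i_3, v^i_1\}$ of \Cref{ham:fig:variable}. Since $u^i_3$ lies in Part~3, it is visited before $u^i_8$ by \Cref{ham:lem:variable2}, so $v^i_1$ is a known neighbour of $u^i_8$; on the other hand $w^i_1$ is new, because the algorithm has so far visited only Parts~3 and~2, which contain none of $w^i_1$'s neighbours except $u^i_8$ itself. Hence the algorithm distinguishes $v^i_1$ from $w^i_1$ on the map and picks whichever realises the desired value. By the time it reaches this gadget it has, by induction via this very lemma applied to $x_1,\dots,x_{i-1}$, already learned all earlier assignments, so its choice for $x_i$ may depend on the adversary's previous $\forall$-choices, exactly as in the \qsatgame{}.

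For the $\forall$-case the green edge is absent, so neither $v^i_1$ nor $w^i_1$ has been revealed when the algorithm is at $u^i_8$ --- the only Part-3 vertex that could have revealed them is $u^i_3$, which is adjacent to neither without the green edge --- while its third neighbour $u^i_7$ was just visited. As in \Cref{path:lem:variableChoice}, since $v^i_1$ and $w^i_1$ are indistinguishable the adversary can order the two reveals so that the algorithm is forced onto the vertex encoding whichever value it wants. To see that the algorithm learns this value before the next gadget, let $a^i_1\in\{v^i_1,w^i_1\}$ be the vertex it moves to. Right after this first step inside Part~1, the revealed neighbourhood of $a^i_1$ consists of the visited $u^i_8$, the new vertex $m^i_1$, and exactly one of $w^i_2$ (if $a^i_1=v^i_1$) and $v^i_2$ (if $a^i_1=w^i_1$); both $v^i_2$ and $w^i_2$ were revealed back in Part~3, since $u^i_3$ is joined to every $v^i_j, w^i_j$ with $j\ge 2$, hence are distinct recognisable vertices on the map. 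So the unique known chain-neighbour of $a^i_1$ already tells the algorithm which of the two chain traversals --- and thus which truth value --- it committed to, long before Parts~3 and~2 of the next gadget are reached.

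The step I expect to be the real work, hence the main obstacle, is the reveal-status bookkeeping rather than any clever idea: one has to verify, invoking \Cref{ham:lem:variable2} carefully, that at $u^i_8$ (and later at $a^i_1$) the neighbourhood is exactly as claimed --- in particular that $w^i_1$ (resp.\ $v^i_1$) was not revealed prematurely and that $v^i_2, w^i_2$ were, so that they are signposts the adversary can neither hide nor mislabel. Two smaller points to spell out: the induction that lets an $\exists$-choice react to earlier $\forall$-choices (immediate once the $\forall$-part is in place), and the degenerate variable occurring in no clause, where the argument still goes through since the chain still has $4|C|$ triples, so $v^i_2, w^i_2$ are still revealed (and such a $\forall$-variable's value is irrelevant to satisfiability anyway).
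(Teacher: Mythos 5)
Your $\exists$-case and the ``adversary controls the branch at $u^i_8$'' half of the $\forall$-case match the paper's proof. The gap is in how the algorithm \emph{learns} the adversary's choice. You claim that upon arriving at $a^i_1$ the unique known chain-neighbour ($v^i_2$ or $w^i_2$) identifies the traversal, because both vertices ``were revealed back in Part~3, hence are distinct recognisable vertices on the map.'' Recognisable is not the same as locatable: $v^i_2$ and $w^i_2$ (indeed all $v^i_j, w^i_j$ with $j\ge 2$) are revealed \emph{simultaneously} as indistinguishable new neighbours of $u^i_3$, and every edge revealed up to the moment the searcher stands on $a^i_1$ --- the edges from $u^i_3$, from $u^i_6$ to the $m^i_j$, from $u^i_8$, and from $a^i_1$ itself --- is invariant under the relabelling that swaps $v^i_j \leftrightarrow w^i_j$ for all $j$. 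The clause edges, which do break this symmetry, are not yet revealed at that point (unless some earlier gadget happened to visit an adjacent clause vertex, which is not guaranteed). Hence two label-to-map embeddings are consistent with everything the algorithm has seen, one placing $a^i_1$ on $v^i_1$ and one on $w^i_1$, so the known neighbour carries no information about which side of the chain the searcher is on.

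The paper closes this differently: the algorithm continues the traversal and uses the map to count how many chain vertices it passes before it first encounters a clause vertex; since $c_k$ hangs off $w^i_{4k-2}$ and $v^i_{4k-1}$ (or the mirrored pair when $x_i$ is negated), this count differs between the two traversal orders, which does break the $v\leftrightarrow w$ symmetry and reveals the assignment still within the gadget, hence before the next $\exists$-variable is decided. (If $x_i$ occurs in no clause its value is irrelevant, as you also note.) You correctly flag the reveal-status bookkeeping as the crux, but the specific bookkeeping claim you rely on is the one that fails; the rest of your argument can stand once this learning mechanism is replaced.
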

\begin{proof}
	If the variable is $\exists$-quantified, the variable gadget contains the edge $\{u^i_3, v^i_1\}$.
	First, the online algorithm always visits $u^i_3$ and it always visits $u^i_8$.
	Due to the known edge $\{u^i_3, v^i_1\}$, the online algorithm is able to distinguish the known vertex $v^i_1$ from the new vertex $w^i_1$ when it is on $u^i_8$.
	Thus, the online algorithm has the power to decide the assignment.
	
	On the other hand, if the variable is $\forall$-quantified, the variable gadget does not contain the edge $\{u^i_3, v^i_1\}$.
	Then, both neighbors of $u^i_8$ are new.
	Thus, the adversary has the power to decide the assignment by sending the online algorithm to its preferred vertex $v^i_1$ or $w^i_1$.
	The online algorithm is able to identify the variable assignment before leaving the variable gadget from the map.
	Since the number of vertices it traverses before encountering the first clause vertex is different depending on the variable assignment, it can use the map to deduce which assignment it chose.
	Otherwise, the variable is not part of a clause and the assignment is irrelevant for solving the formula.
\end{proof}

At last, we show that all clause vertices are visited if and only if the online algorithm uses a truth assignment that fulfills all clauses.

\begin{restatable}{lemma}{hamlemclause}\label{ham:lem:clause2}
    The online algorithm is able to visit all vertices of $G$ if and only if for all $C_j \in C$ at least one variable $x_i \in C_j$ satisfies $C_j$ in the assignment chosen by the online algorithm.
\end{restatable}
\begin{proof}
    Consider a clause $C_j \in C$.
	Let $x_i$ be a variable such that the assignment chosen by the algorithm satisfies the clause $C_j$.
	W.l.o.g., let the value assigned to $x_i$ be true.
	Then, the algorithm discovers $c_j$ as a new vertex or as a known neighbor from some other variable gadget when in $w^i_k$ for some $k \mod 4 = 2$ (the exact value of $k$ does not matter for the argument), where the only possible neighbors are $c_j$ and $v^i_{k+1}$ which is known from $u^i_3$.
	Thus, $c_j$ is identifiable as a clause vertex and can be visited, and according to \Cref{ham:lem:variable2}, the algorithm can also return to the correct position of the variable gadget.
	
	Now assume that $C_j$ is not satisfied.
	Then for any variable $x_i$ contained in $C_j$, when the algorithm discovers $c_j$, it was in a vertex $v^i_k$ (resp. $w^i_k$) which also had two known neighbors.
	Thus, visiting $c_j$ would have resulted in no longer being able to visit $m^i_k$.

    Together with \Cref{ham:lem:variable2}, we obtain that the whole graph can be visited if and only if all clauses are satisfied.
\end{proof}

\paragraph*{The Complete Reduction}

Given a \qsatgame{} instance with formula $\varphi$ containing variables $x_1, \dots, x_n$ and clauses $C_1, \dots, C_m$, we create an instance $G$ of \ouhpg{} as follows:
For each variable and clause, a gadget is created as described above and the clause gadgets are connected to the variable gadgets depending on the variables they contain.
For each $i \in \{1, \dots, n-1\}$, the edge $\{u^i_0, u^{i+1}_1\}$ is added, connecting all the variable gadgets.
Additionally, a vertex $s$ is added and connected to the first vertex of the first variable gadget and a vertex $t$ is added and connected to the last vertex of the last variable gadget.
An example of this construction can be seen in \Cref{ham:fig:complete}.
With this, we can now prove \Cref{ham:thm:uHamPathPspaceHard}.

\begin{proof}[Proof of \Cref{ham:thm:uHamPathPspaceHard}]
	When connecting the variable gadgets as described above, the only possible next vertex after a vertex $u^i_0$ is $u^{i+1}_1$.
	Thus by \Cref{ham:lem:variable2}, the online algorithm can always find a path from $s$ to $t$.
	By \Cref{ham:lem:clause2}, that path is Hamiltonian if and only if all clauses are satisfied.
	Due to \Cref{ham:lem:variableChoice}, the online algorithm can decide the truth assignments of $\exists$-quantified variables, and the adversary can decide the truth assignments of $\forall$-quantified variables.
	Therefore the online algorithm can find a Hamiltonian path in $G$ if and only if the $\exists$-player has a winning strategy for the \qsatgame{} instance.
	The size of the variable gadgets is polynomially bounded by the number of clauses, while the clause gadgets have constant size.
	Thus, our reduction runs in polynomial time and the claim follows.
\end{proof}

\begin{figure}[!ht]
	\centering
	\resizebox{0.8\textwidth}{!}{
		\def\revealcolorid{blue}\def\revealcoloriddot{densely dashed}
		\def\revealcolortfe{dkgreen}\def\revealcolortfedot{dash dot}
		\def\revealcolorv{cyan}\def\revealcolorvdot{dash dot dot}
		\def\revealcolorm{red}\def\revealcolormdot{loosely dashed}
		\def\revealcolorw{cyan}\def\revealcolorwdot{dash dot dot}
		\def\clausecolor{mauve}\def\clausecolordot{densely dotted}
		\begin{tikzpicture}[scale=0.4,
			node/.style = {shape=circle, draw, inner sep=0pt, minimum size=0.25cm},
			textnode/.style = {shape=circle, draw, inner sep=0pt, minimum size=0.4cm},
			smallnode/.style = {shape=circle, draw, inner sep=0pt, minimum size=0.1cm},
			box/.style = {rectangle, fill=gray!20, rounded corners, fill opacity=1, inner sep=2pt}]
			
			\def\a{1}
			\def\u{13}
			
			\foreach \v in {0,1} {
				\pgfmathtruncatemacro{\vv}{\v+1}
				\node (b\v0) at (\v*\u, 2) {};
				\node (b\v1) at (11.5+\v*\u, -2) {};
				\node[box, fit=(b\v0)(b\v1)] (box\v) {};
				\node[below right, inner sep=3pt] at (box\v.north west) {$x_{\vv}$};
				\node[smallnode] (\v1id1) at (\u*\v, 0) {};
				\node[smallnode] (\v1id2) at (0.5+\u*\v, 0.5) {};
				\node[smallnode] (\v1id3) at (0.5+\u*\v, -0.5) {};
				\node[smallnode] (\v1id4) at (1+\u*\v, 0) {};
				\node[smallnode] (\v1id5) at (1.5+\u*\v, 0) {};
				\node[smallnode] (\v1id6) at (2+\u*\v, 0) {};
				\node[smallnode] (\v1id7) at (2.5+\u*\v, 0) {};
	
				\path[-] (\v1id1) edge (\v1id2) edge (\v1id3) (\v1id4) edge (\v1id2) edge (\v1id3);
				\path[-] (\v1id2) edge (\v1id3);
				\path[-] (\v1id4) edge (\v1id5) (\v1id5) edge (\v1id6) (\v1id6) edge (\v1id7);
				
				\node[smallnode] (\v1v1d) at (3+\u*\v, 0) {};
				\node[smallnode] (\v1v1e) at (11+\u*\v, 0) {};
				\node[smallnode] (\v1v1e2) at (11.5+\u*\v, 0) {};
				\path[-] (\v1v1e) edge (\v1v1e2);
				\foreach \x in {1,...,8} {
					\foreach \y in {1,2,3} {
						\node[smallnode] (\v1v1\x\y) at (2.5+\a*\x+\u*\v, \y-2) {};
						\pgfmathtruncatemacro{\h}{\y-1}
						\ifthenelse{\h > 0}{
							\path[-] (\v1v1\x\h) edge (\v1v1\x\y);
						}{}
					}
					\pgfmathtruncatemacro{\h}{\x-1}
					\ifthenelse{\h > 0}{
						\ifthenelse{\h = 2 \OR \h = 6}{
							\path[-] (\v1v1\h1) edge (\v1v1\x3) (\v1v1\h3) edge (\v1v1\x1);
						}{
							\path[-] (\v1v1\h1) edge (\v1v1\x3) (\v1v1\h3) edge (\v1v1\x1);
						}
					}{}
				}
				\path[-] (\v1id7) edge (\v1v1d);
				\path[-] (\v1v1d) edge (\v1v111) edge (\v1v113);
				\path[-] (\v1v1e) edge (\v1v181) edge (\v1v183);
				
				\ifthenelse{\v = 0}{
					\path[-, draw=\revealcolortfe, \revealcolortfedot, out=10, in=180, out distance=1cm, in distance=1cm] (\v1id2) edge (\v1v113);
				}{}
				\path[-, draw=\revealcolorid, \revealcoloriddot, out=0, in=90, out distance=0.5cm, in distance=0.5cm] (\v1id2) edge (\v1id7);
				\path[-, draw=\revealcolorv, \revealcolorvdot, out=32, in=160, out distance=3cm, in distance=1cm] (\v1id2) edge (\v1v123);
				\path[-, draw=\revealcolorv, \revealcolorvdot, out=32, in=160, out distance=3cm, in distance=1cm] (\v1id2) edge (\v1v133);
				\path[-, draw=\revealcolorv, \revealcolorvdot, out=32, in=160, out distance=3cm, in distance=1cm] (\v1id2) edge (\v1v143);
				\path[-, draw=\revealcolorv, \revealcolorvdot, out=32, in=160, out distance=3cm, in distance=1cm] (\v1id2) edge (\v1v153);
				\path[-, draw=\revealcolorv, \revealcolorvdot, out=32, in=160, out distance=3cm, in distance=1cm] (\v1id2) edge (\v1v163);
				\path[-, draw=\revealcolorv, \revealcolorvdot, out=32, in=160, out distance=3cm, in distance=1cm] (\v1id2) edge (\v1v173);
				\path[-, draw=\revealcolorv, \revealcolorvdot, out=32, in=160, out distance=3cm, in distance=1cm] (\v1id2) edge (\v1v183);
				\path[-, draw=\revealcolorw, \revealcolorwdot, out=290, in=200, out distance=3cm, in distance=1cm] (\v1id2) edge (\v1v121);
				\path[-, draw=\revealcolorw, \revealcolorwdot, out=290, in=200, out distance=3cm, in distance=1cm] (\v1id2) edge (\v1v131);
				\path[-, draw=\revealcolorw, \revealcolorwdot, out=290, in=200, out distance=3cm, in distance=1cm] (\v1id2) edge (\v1v141);
				\path[-, draw=\revealcolorw, \revealcolorwdot, out=290, in=200, out distance=3cm, in distance=1cm] (\v1id2) edge (\v1v151);
				\path[-, draw=\revealcolorw, \revealcolorwdot, out=290, in=200, out distance=3cm, in distance=1cm] (\v1id2) edge (\v1v161);
				\path[-, draw=\revealcolorw, \revealcolorwdot, out=290, in=200, out distance=3cm, in distance=1cm] (\v1id2) edge (\v1v171);
				\path[-, draw=\revealcolorw, \revealcolorwdot, out=290, in=200, out distance=3cm, in distance=1cm] (\v1id2) edge (\v1v181);
				\path[-, draw=\revealcolorm, \revealcolormdot, out=300, in=220, out distance=1cm, in distance=1cm] (\v1id6) edge (\v1v122);
				\path[-, draw=\revealcolorm, \revealcolormdot, out=300, in=220, out distance=1cm, in distance=1cm] (\v1id6) edge (\v1v132);
				\path[-, draw=\revealcolorm, \revealcolormdot, out=300, in=220, out distance=1cm, in distance=1cm] (\v1id6) edge (\v1v162);
				\path[-, draw=\revealcolorm, \revealcolormdot, out=300, in=220, out distance=1cm, in distance=1cm] (\v1id6) edge (\v1v172);
				\path[-, draw=\revealcolorm, \revealcolormdot, out=300, in=220, out distance=1cm, in distance=1cm] (\v1id6) edge (\v1v1e);
			}
			
			\def\x{2}
			\def\h{3}
			\node[textnode] (c\x) at (9.5, 4) {$c_1$};
			\path[-, draw=\clausecolor, \clausecolordot, in=80, out=250, in distance=5cm, out distance=2cm] (c\x) edge (01v1\x1);
			\path[-, draw=\clausecolor, \clausecolordot, in=80, out=260, in distance=1.5cm, out distance=2cm] (c\x) edge (01v1\h3);
			\path[-, draw=\clausecolor, \clausecolordot, in=120, out=330, in distance=3cm, out distance=1cm] (c\x) edge (11v173);
			\path[-, draw=\clausecolor, \clausecolordot, in=100, out=326, in distance=5.7cm, out distance=2cm] (c\x) edge (11v161);
			\def\x{6}
			\def\h{7}
			\node[textnode] (c\x) at (15, 4) {$c_2$};
			\path[-, draw=\clausecolor, \clausecolordot, in=70, out=230, in distance=2cm, out distance=2cm] (c\x) edge (01v1\x3);
			\path[-, draw=\clausecolor, \clausecolordot, in=80, out=240, in distance=4cm, out distance=2cm] (c\x) edge (01v1\h1);
			\path[-, draw=\clausecolor, \clausecolordot, in=100, out=300, in distance=4cm, out distance=1cm] (c\x) edge (11v131);
			\path[-, draw=\clausecolor, \clausecolordot, in=90, out=290, in distance=1cm, out distance=1cm] (c\x) edge (11v123);
		
			\node[textnode] (s) at (-1.5,0) {$s$};
			\node[textnode] (t) at (26,0) {$t$};
			\path[-] (01v1e2) edge (11id1);
			\path[-] (s) edge (01id1);
			\path[-] (11v1e2) edge (t);
		\end{tikzpicture}
	}
	\caption{Sketch of the full construction of the reduction from \qsatgame{} to \ouhpg{} with the formula $\exists x_1 \forall x_2 (x_1 \vee x_2) \wedge (\overline{x}_1 \vee \overline{x}_2)$.}
	\label{ham:fig:complete}
\end{figure}

\paragraph*{Directed Graphs}
	
Next we show that the above ideas can also be applied to directed graphs. 
The \odhpg{} is defined analogously to the \ouhpg.
	
\begin{restatable}{corollary}{hamcordHamPathPspaceHard}\label{ham:cor:dHamPathPspaceHard}
	\odhpg{} is \PSPACE-complete.
\end{restatable}
\begin{proof}
	We use a similar construction as for \Cref{ham:thm:uHamPathPspaceHard}, but replace the undirected edges with directed arcs as follows:
	\begin{itemize}
		\item Any edge that only reveals vertices but cannot be used by the arguments of \Cref{ham:lem:variable2} is replaced with an arc directed towards the revealed vertex.
		\item The edges $\{u^i_2, u^i_3\}$ and $\{v^i_j, m^i_j\}, \{m^i_j, w^i_j\}$, for $j \in \{1, \dots, 4|C|\}$ and $i \in \{1, \dots, |X|\}$, can be traversed in both directions, and are therefore replaced by arcs $(u^i_2, u^i_3)$, $(u^i_3, u^i_2)$, $(v^i_j, m^i_j)$, $(m^i_j, v^i_j)$, $(m^i_j, w^i_j)$ and $(w^i_j, m^i_j)$.
		\item The edges $\{v^i_j, c_k\}$ and $\{w^i_j, c_k\}$, for $j\ \text{mod}\ 4 = 2$, (if they exist) are replaced by arcs $(v^i_j, c_k)$ and $(w^i_j, c_k)$.
		\item The edges $\{v^i_j, c_k\}$ and $\{w^i_j, c_k\}$, for $j\ \text{mod}\ 4 = 3$, (if they exist) are replaced by arcs $(c_k, v^i_j)$ and $(c_k, w^i_j)$.
		\item The remaining edges can only be traversed in one direction (by the arguments of \Cref{ham:lem:variable2,ham:thm:uHamPathPspaceHard}) and are therefore replaced by the respective directed arc.
	\end{itemize}
	Then the claim follows from the same arguments as in \Cref{ham:lem:variable2,ham:lem:variableChoice,ham:lem:clause2,ham:thm:uHamPathPspaceHard}.
\end{proof}

\subsection*{Hamiltonian Cycle}\label{hamcycle:def:startGadget}

In this section, we consider online games based on \hc{} instead of \hp.
In the directed case, \PSPACE-completeness easily follows from \Cref{ham:cor:dHamPathPspaceHard} by just adding an arc from the last vertex of the last variable gadget back to the vertex the algorithm starts in.
However, when considering undirected graphs, we have to make sure that the algorithm traverses our construction in the correct direction.
For that, we introduce a start gadget, that allows the algorithm to choose a direction, and to simplify our arguments, also allows the adversary to force a loss if the algorithm chooses the wrong direction.

\paragraph*{Start Gadget for Hamiltonian Cycle}
The start gadget consists of two parts, the first allows the algorithm to choose a direction, while the second part allows the adversary to force a loss if the algorithm chose to traverse the construction backwards.
\begin{enumerate}
	\item There are vertices $s, s', t$. 
	Further, there are two sets $V_1 = \{v_{1,1}, v_{1,2}\}$ and $V_2 = \{v_{2,1}, v_{2,2}\}$. 
	$s$, $V_2$ and $v_{1,1}$ as well as $s$, $V_2$ and $v_{1,2}$ form two cliques of size $4$ that overlap in $s$ and $V_2$.
	$s'$ is connected to $V_2$, and $t$ is connected to $V_1$.
	\item There are vertices $v_3, v_4, v_5, v_6$ and $s''$.
	Vertices $v_3$ and $v_4$ are both connected to $s'$ and $v_5$, as well as to each other.
	Finally, there are the edges $\{v_5, v_6\}, \{v_6, s''\}, \{v_3, s''\}$ and $\{v_6, t\}$.
\end{enumerate}
An example of this construction is shown in \Cref{ham:fig:startDirection}.

\begin{figure}[!ht]
	\centering
	\resizebox{0.7\textwidth}{!}{
		\begin{tikzpicture}[scale=1,
			box1/.style = {rectangle, fill=gray!20, rounded corners, fill opacity=1, inner sep=23pt},
			box2/.style = {rectangle, fill=gray!40, rounded corners, fill opacity=1, inner sep=17pt}]
			
			\node[] (P11) at (-4, -1) {};
			\node[] (P12) at (4, 1) {};
			\node[] (P21) at (6, -1) {};
			\node[] (P22) at (10, 1) {};
			
			\node[box1, fit=(P11)(P12)] (box1) {};
			\node[box1, fit=(P21)(P22)] (box2) {};
			\node[below right, inner sep=3pt] (P1) at (box1.north west) {$1.$};
			\node[below right, inner sep=3pt] (P2) at (box2.north west) {$2.$};
			
			\node[] (V311) at (-2, 1) {};
			\node[] (V322) at (-2, -1) {};
			\node[] (V411) at (2, 1) {};
			\node[] (V422) at (2, -1) {};

			\node[box2, fit=(V311)(V322)] (box3) {};
			\node[box2, fit=(V411)(V422)] (box4) {};
			\node[below right, inner sep=3pt] (V3) at (box3.north west) {$V_1$};
			\node[below right, inner sep=3pt] (V4) at (box4.north west) {$V_2$};

			\node[shape=circle, draw=black, inner sep=0pt, minimum size=0.75cm] (s) at (0, 0) {$s$};
			\node[shape=circle, draw=black, inner sep=0pt, minimum size=0.75cm] (s') at (4, 0) {$s'$};
			\node[shape=circle, draw=black, inner sep=0pt, minimum size=0.75cm] (t) at (-4, 0) {$t$};
			\node[shape=circle, draw=black, inner sep=0pt, minimum size=0.75cm] (v31) at (-2, 1) {$v_{1,1}$};
			\node[shape=circle, draw=black, inner sep=0pt, minimum size=0.75cm] (v32) at (-2, -1) {$v_{1,2}$};
			\node[shape=circle, draw=black, inner sep=0pt, minimum size=0.75cm] (v41) at (2, 1) {$v_{2,1}$};
			\node[shape=circle, draw=black, inner sep=0pt, minimum size=0.75cm] (v42) at (2, -1) {$v_{2,2}$};
			\node[shape=circle, draw=black, inner sep=0pt, minimum size=0.75cm] (v3) at (6, 1) {$v_3$};
			\node[shape=circle, draw=black, inner sep=0pt, minimum size=0.75cm] (v4) at (6, -1) {$v_4$};
			\node[shape=circle, draw=black, inner sep=0pt, minimum size=0.75cm] (v5) at (7, 0) {$v_5$};
			\node[shape=circle, draw=black, inner sep=0pt, minimum size=0.75cm] (v6) at (8.5, 0) {$v_6$};
			\node[shape=circle, draw=black, inner sep=0pt, minimum size=0.75cm] (s'') at (10, 0) {$s''$};

			\path[-] (s) edge (v31) edge (v32) edge (v41) edge (v42);
			
			\path[-] (v31) edge (v41) (v32) edge (v42) (v41) edge (v42);
			\path[-, in=170, out=280] (v31) edge (v42);
			\path[-, in=260, out=10] (v32) edge (v41);

			\path[-] (s') edge (v41) edge (v42) edge (v3) edge (v4);
			\path[-] (v5) edge (v3) edge (v4) edge (v6);
			\path[-] (v3) edge (v4) (v6) edge (s'');
			\path[-, in=135, out=0] (v3) edge (s'');
			
			\path[-, out=300, in=249, out distance=3cm, in distance=3cm] (t) edge (v6);

			\path[-] (t) edge (v31) edge (v32);
		\end{tikzpicture}
	}
	\caption{
		Start gadget for the reduction from \qsatgame{} to \ohcg.
		A possible correct traversal is $s, v_{2,1}, s', v_4, v_3, v_5, v_6, s''$, and after traversing the reduction graph for Hamiltonian path $t, v_{1,1}, v_{2,2}, v_{1,2}, s$.
	}
	\label{ham:fig:startDirection}
\end{figure}
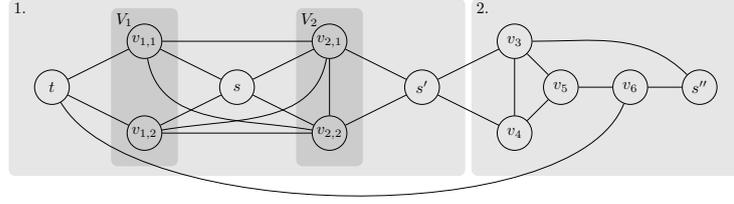

\begin{restatable}{lemma}{hamcyclelemstart}\label{hamcycle:lem:start}
	Let $G = (V, E)$ be a connected graph with two vertices $a, b \in V$, and let $a$ be connected to $s''$ and $b$ connected to $t$.
	Then, the algorithm can find a path from $s$ to $s''$, then travel from $a$ to $b$, and at last it can visit all remaining vertices of the start gadget from $t$ and return to $s$.
	Further, if the algorithm enters $t$ before $s''$, the adversary can force a loss.
\end{restatable}
\begin{proof}
	First, we show that the algorithm can find a path from $s$ to $s''$, and then that it can visit all remaining vertices from $t$ and return to $s$.
	The algorithm has to go to a vertex $v \in N(s)$, whereby the adversary can choose between vertices from $V_1$ and $V_2$.
	The algorithm can distinguish the vertices of $V_1$ and $V_2$ by their degree.
	\begin{description}
		\item[Case 1] $v \in V_1$\\
		Then, the algorithm is able to choose a vertex in $V_2$ as those are distinguishable from $t$.
		Next, the algorithm can choose to visit $s'$, as it is a new vertex.
		\item[Case 2] $v \in V_2$\\
		Then, the algorithm can choose to visit $s'$, as it is a new vertex.
	\end{description}
	From $s'$ the algorithm can visit $v_3$ and $v_4$ in any order (from $s'$ they are both new, and after choosing one the other is the only known neighbor).
	If $v_3$ is visited second, $v_5$ is distinguishable from $s''$ as $v_5$ is known.
	If the algorithm chooses $s''$ instead of $v_5$, the algorithm loses the game as it can no longer visit $v_5$ and still return to $s$.
	Otherwise $v_5$ is the only choice and from there $s''$ can be reached as it is distinguishable from $t$ due to the known neighbor $v_3$.
	
	When returning to $t$ we distinguish the cases how the algorithm reached $s'$:
	\begin{description}
		\item[Case 1] From $t$, the algorithm can only visit the not yet visited vertex of $V_1$. 
		Then, it can choose to visit the not yet visited vertex of $V_2$, as it is the only not visited vertex left, and finally return to $s$.
		\item[Case 2] From $t$, the algorithm can visit either vertex of $V_1$. 
		Then, it can choose to visit the not yet visited vertex of $V_2$, as it is the only not visited neighbor, then visit the other vertex of $V_2$ and finally return to $s$.
	\end{description}
	Finally, the adversary can force a loss if the algorithm visits $t$ before $s''$:
	If the algorithm visits $t$ from $v_6$, it cannot visit the vertices of $G$ and return to $s$ anymore, since it already visited vertices on the only two paths from $s$ to vertices of $G$.
	On the other hand, if the algorithm chooses to enter $t$ in Case 1 above, the adversary can force it to visit $v_6$ next, as both neighbors of $t$ are new.
	From there, it can force $s''$ and $v_3$ next for the same reason, resulting in the same outcome as above.
\end{proof}

With the help of the starting gadget, we can reuse the construction from \Cref{ham:thm:uHamPathPspaceHard}.

\begin{restatable}{theorem}{hamcyclethmucgPspaceHard}\label{hamcycle:thm:ucgPspaceHard}
	\ouhcg{} is \PSPACE-complete.
\end{restatable}
\begin{proof}
	We use the construction of \Cref{ham:thm:uHamPathPspaceHard}, but instead of attaching vertices $s$ and $t$ to the vertices $u^i_1$ and $u^{|X|}_0$, we add the start gadget and connect $s''$ with $u^i_1$ and $t$ with $u^{|X|}_0$.
	Then, the new start vertex for the game is $s$.
	Thus, the claim follows from the arguments of \Cref{hamcycle:lem:start,ham:thm:uHamPathPspaceHard}.
\end{proof}

\subsection{Relaxing the Path Constraint}

If the online algorithm is allowed to visit vertices or edges multiple times instead of precisely once, we say that the path constraint is relaxed to a walk or trail. 
We show that the solution for the relaxed version stays the same, i.e., it is always a cycle that visits every vertex precisely once, when we add an additional threshold of $|V|$, for the cost of the solution. 

\begin{restatable}{lemma}{hamlemwalkallvertices}\label{ham:lem:walkAllVertices}
	Let $G = (V, E, cost)$ be any graph with unit cost edges.
	Then any solution to \otsg{} on $G$ with relaxed path constraint to walk (resp. trail) is a cycle, if the cost threshold is $|V|$.
	This result can be extended to complete graphs by adding edges with cost $(1+\alpha)$ for $\alpha > 0$ for any non-existing edge in $G$.
\end{restatable}
\begin{proof}
	Assume, the cyclic walk $w$ starting at $s \in V$ is not a cyclic path starting at $s$.
	Then, at least one vertex $v \neq s$ was visited twice in $w$.
	Thus, the walk includes the subsequences $u, v, x$ and $u', v, x'$.
	That is, $v$ is included twice.
	Because the walk is only able to travel exactly $|V|$ unit cost edges, the walk $w$ includes $|V|+1$ vertices at maximum, whereby $s$ has to be included twice.
	Consequently, there is a vertex $v'$,which is not included.
	Thus, $w$ is not visiting all vertices in $V$ and $w$ is no solution to $G$.
	
	The case if a trail is to be sought is analogous.
	Then, the starting vertex $s$ is included only once and at most $|V|$ edges can be traversed.
\end{proof}

In \Cref{ham:tab:results}, our results for online games based on Hamiltonian $s$-$t$-path problems are presented.
The fields with ? remain unanswered.

\begin{table}[!ht]
	\centering
	\begin{tabular}{ll|l|l}
		\multicolumn{2}{l|}{}                                    & no costs                                                              & unit costs                       \\ \hline
		\multicolumn{1}{l|}{\multirow{2}{*}{Path}}  & undirected & \multirow{2}{*}{PSPACE-complete}                                      & \multirow{2}{*}{PSPACE-complete} \\ \cline{2-2}
		\multicolumn{1}{l|}{}                       & directed   &                                                                       &                                  \\ \hline
		\multicolumn{1}{l|}{\multirow{2}{*}{Trail}} & undirected & \multirow{2}{*}{?}                                      & \multirow{2}{*}{PSPACE-complete} \\ \cline{2-2}
		\multicolumn{1}{l|}{}                       & directed   &                                                                       &                                  \\ \hline
		\multicolumn{1}{l|}{\multirow{2}{*}{Walk}}  & undirected & LOGSPACE-complete                                                            & \multirow{2}{*}{PSPACE-complete} \\ \cline{2-3}
		\multicolumn{1}{l|}{}                       & directed   &  
		?&                                 
	\end{tabular}
	\caption{
		The complexity of different \ohpg\textsc{s}.
	}
	\label{ham:tab:results}
\end{table}

\section{Additional Problems}
\label{sec:additional}
There are many problems that are closely related to the $s$-$t$ path problem or the Hamiltonian path problem. 
Often, they only add simple constraints on the solution.
Our previously presented reductions can handle most of these variations by extending them with a small construction or setting a value, representing the constraint, to a specific number.
For all of the following problems, we use the definitions as given in the appendix of \cite{DBLP:books/fm/GareyJ79}.

\begin{restatable}{theorem}{addthmpathVariants}\label{add:thm:pathVariants}
	\opwfpg, \ocspg, \otdpg, \ovdpg{} are \PSPACE-complete.
\end{restatable}
\begin{proof}
	The corresponding reductions are all directly derived from the corresponding reductions for \ostpg{}.
	The additional constraints on the path can be relaxed such that only a constant number of paths are to be found.
	One of those paths is the reduction path of \ostpg{}.
	The \PSPACE-completeness of all problems holds also for the trail and walk version of these problems as well as for directed and undirected graphs as proven in \Cref{sec:path}.
\end{proof}
Based on this proof, it is possible to argue that online games that require to compute an $s$-$t$ path in a graph, optionally with additional constraints, are \PSPACE-hard as one can drop the additional constraints on the $s$-$t$ path.
This is also true for $s$-$t$ trail and for $s$-$t$ walk problems as long as the base problem is \PSPACE-hard.

\begin{restatable}{corollary}{addcorstackerCrane}\label{add:cor:stackerCrane}
	\oscg{} is \PSPACE-complete.
\end{restatable}
\begin{proof}
	We use the construction for \ostupg{} with the following small modification:
	The arc $(t, s)$ is added.
	Since the starting point is still $s$ and the algorithm needs to traverse all arcs, it has to find an $s$-$t$ path to be able to traverse the added arc.
	Thus, the claim follows from the arguments of \Cref{path:thm:uPathPspaceHard}.
\end{proof}

\begin{restatable}{theorem}{addthmruralPostman}\label{add:thm:ruralPostman}
	\orpg{} is \PSPACE-complete.
\end{restatable}

\begin{proof}
	We combine the start gadget for the \ouhcg{} with the reduction for \ostupg:
	The vertex $s$ of the construction from \Cref{path:thm:uPathPspaceHard} is replaced with the start gadget for \ouhcg, and the vertex $s''$ is connected to $u^1_1$ instead.
	An additional vertex $v$ is added, and is connected to the vertex $t$ of the start gadget as well as $u^{|X|}_7$.
	The only edge that the algorithm is forced to traverse is $\{v, u^{|X|}_7\}$, and the starting vertex is $s$ (of the start gadget).
	An example of this construction can be seen in \Cref{ham:fig:startRural}.
	
	By the arguments of \Cref{hamcycle:lem:start}, the algorithm has to leave the start gadget via $s''$, as otherwise the adversary can force a loss.
	Thus, it can reach $u^{|X|}_7$ and the edge $\{v, u^{|X|}_7\}$ if and only if it finds a path from $u^1_1$ to $u^{|X|}_7$.
	Therefore, the claim follows from \Cref{path:thm:uPathPspaceHard}.
\end{proof}
\begin{figure}[!ht]
	\centering
	\resizebox{\textwidth}{!}{
		\begin{tikzpicture}[scale=1,
			box/.style = {rectangle, fill=gray!20, rounded corners, fill opacity=1, inner sep=5pt},
			node/.style = {shape=circle, draw=black, inner sep=0pt, minimum size=0.9cm}]

			\node[node, minimum size=0.9cm] (s) at (0, 0) {\LARGE$s$};
			\node[node] (s') at (4, 0) {};
			\node[node, minimum size=0.9cm] (t) at (-4, 0) {\LARGE$t$};
			\node[node] (v31) at (-2, 1) {};
			\node[node] (v32) at (-2, -1) {};
			\node[node] (v41) at (2, 1) {};
			\node[node] (v42) at (2, -1) {};
			\node[node] (v3) at (6, 1) {};
			\node[node] (v4) at (6, -1) {};
			\node[node] (v5) at (7, 0) {};
			\node[node] (v6) at (8.5, 0) {};
			\node[node] (s'') at (10, 0) {\LARGE$s''$};
			
			\path[-] (s) edge (v31) edge (v32) edge (v41) edge (v42);
			
			\path[-] (v31) edge (v41) (v32) edge (v42) (v41) edge (v42);
			\path[-, in=170, out=280] (v31) edge (v42);
			\path[-, in=260, out=10] (v32) edge (v41);
			
			\path[-] (s') edge (v41) edge (v42) edge (v3) edge (v4);
			\path[-] (v5) edge (v3) edge (v4) edge (v6);
			\path[-] (v3) edge (v4) (v6) edge (s'');
			\path[-, in=135, out=0] (v3) edge (s'');
			
			\path[-, out=300, in=249, out distance=3cm, in distance=3cm] (t) edge (v6);
			
			\path[-] (t) edge (v31) edge (v32);

			\node[node] (t1) at (22, 0) {\Large$u_7^{|X|}$};
			
			\node[node, minimum size=0.9cm] (21) at (-6, 0) {\LARGE$v$};
			\path[-] (21) edge (t);

			\path[-, in=300, out=240, looseness=0.4, dotted, draw=blue] (t1) edge (21);
			
			\node[align=center] (text) at (16, 0) {{\LARGE Reduction for}\\ {\LARGE \ostupg{}.}};

		\end{tikzpicture}
	}
	\caption{Reduction from \qsatgame{} to \orpg. The dotted blue edge has to be traversed by the algorithm.}
	\label{ham:fig:startRural}
\end{figure}
	
Additionally, we derive hardness results for problems based on \hp.
\begin{restatable}{theorem}{addthmhamVariants}\label{add:thm:hamVariants}
	\omtsg, \obtsg, \olcg, \olpg{} are \PSPACE-complete.
\end{restatable}
\begin{proof}
	The corresponding reductions are all directly derived from the corresponding reductions for \ouhpg{} or \ouhcg{}.
	For \obtsg, the cost constraints can be relaxed to unit costs.
	For \omtsg, we can use the construction for unit cost \otsg{}, and extend it to a complete graph as in \Cref{ham:lem:walkAllVertices}, by choosing $0 < \alpha \leq 1$.
	This results in the \PSPACE-completeness of both problems, by setting the threshold to $|V|$ for \omtsg{} and to $1$ for \obtsg{}.
	The \PSPACE-completeness of both problems holds also for the trail and walk version of these problems as well as for directed and undirected graphs as proven in \Cref{sec:hampath}.
	
	Furthermore, the reduction for \olpg{} is the same for \ouhpg{}.
	Analogously, the reduction for \olcg{} is the same for \ouhcg{}.
	Only the threshold for both problems has to be set to $|V|$.
	Thus, both problems are \PSPACE-complete.
	This also holds for the directed version of the problems.
\end{proof}

\section{Conclusion}
\label{sec:conclusion}
Graph Exploration and Treasure Hunt are interpretable as the online versions of classical $s$-$t$ path and Hamiltonian path problems.
We modeled the Graph Exploration and Treasure Hunt problems with an unlabeled map as online games between the online algorithm and the adversary to obtain decision versions of these problems.
Furthermore, we analyzed them from a complexity theoretic perspective and showed that nearly all are \PSPACE-complete.

It remains open whether the approximation of the discussed problems is \PSPACE-hard.
Another interesting question is the complexity of the existence of undirected online Hamiltonian trails and walks.
Additional path problems may be analyzed as well.
For example, one may find a path problem that is not directly reducible via the \ostpg{} because its constraints do not allow a standard $s$-$t$ path to be a solution.

Besides these open problems concerning graph exploration problems, the online version of other typical combinatorial problems may be analyzed such as \textsc{Partition}, \textsc{Scheduling} or \textsc{Matching}.

\newpage

\bibliography{bibliography}

\newpage

\end{document}